\let\olddegree\degree  
\let\degree\relax      
\let\degree\olddegree  
\newtheorem*{remark}{Remark}
\theoremstyle{definition}
\newtheorem{exmp}{Example}
\newtheorem{theorem}{Theorem}
\title{\huge{Optimized compiler for Distributed Quantum Computing}}
\author{Daniele Cuomo, Marcello~Caleffi~\IEEEmembership{IEEE Senior Member}, Kevin Krsulich, Filippo Tramonto,\\ Gabriele Agliardi, Enrico Prati, Angela~Sara~Cacciapuoti~\IEEEmembership{IEEE Senior Member}
    \thanks{D. Cuomo is with \href{http://www.quantuminternet.it}{Future Communications Laboratory}; Department of Physics \textit{Ettore Pancini}, University of Naples Federico II, 80126 NA, Italy. E-mail: \href{mailto:daniele.cuomo@unina.it}{daniele.cuomo@unina.it}}
	\thanks{K. Krsulich is with IBM Quantum, IBM T.J. Watson Research Center, Yorktown Heights, NY 10598, USA. Email: \href{mailto:kevin.krsulich@ibm.com}{kevin.krsulich@ibm.com}}
	\thanks{F. Tramonto is with Kyndryl Italia Innovation Services S.r.l., Segrate (MI) 20090, Italy. Part of this work was completed while F. Tramonto was with IBM Client Innovation Center S.r.l., Peschiera Borromeo, MI 20068, Italy. Email: \href{mailto:filippo.tramonto@gmail.com}{filippo.tramonto@gmail.com}}	
	\thanks{G. Agliardi is with Dipartimento di Fisica, Politecnico di Milano, Milano 20133, Italy; IBM Italia S.p.A., Segrate, MI 20090, Italy. Email: \href{mailto:gabrielefrancesco.agliardi@polimi.it}{gabrielefrancesco.agliardi@polimi.it}.}	
	\thanks{E. Prati is with Istituto di Fotonica e Nanotecnologie, Consiglio Nazionale delle Ricerche, Milano 20133; National Inter-university Consortium for Telecommunications (CNIT), Parma I-43124, Italy. E-mail: \href{mailto:enrico.prati@cnr.it}{enrico.prati@cnr.it}}	
    \thanks{A.S. Cacciapuoti and M. Caleffi are with \href{http://www.quantuminternet.it}{Future Communications Laboratory}; Department of Electrical Engineering and Information Technology (DIETI); University of Naples Federico II, 80125 NA, Italy. Laboratorio Nazionale di Comunicazioni Multimediali, National Inter-university Consortium for Telecommunications (CNIT), 80126 NA, Italy. E-mail \href{mailto:angelasara.cacciapuoti@unina.it}{angelasara.cacciapuoti@unina.it}, \href{mailto:marcello.caleffi@unina.it}{marcello.caleffi@unina.it}.}
}
\begin{document}

\maketitle








\begin{abstract}
    Practical distributed quantum computing requires the development of efficient compilers, able to make quantum circuits compatible with some given hardware constraints. This problem is known to be tough, even for local computing. 
    Here, we address it on distributed architectures. As generally assumed in this scenario, \textit{telegates} represent the fundamental remote (inter-processor) operations. Each telegate consists of several tasks: i) entanglement generation and distribution, ii) local operations, and iii) classical communications. Entanglement generations and distribution is an expensive resource, as it is time-consuming and fault-prone.  
    To mitigate its impact, we model an optimization problem that combines running-time minimization with the usage of that resource. 
    Specifically, we provide a parametric \texttt{ILP} formulation, where the parameter denotes a time horizon (or time availability); the objective function count the number of used resources. To minimize the time, a binary search solves the subject \texttt{ILP} by iterating over the parameter.
    Ultimately, to enhance the solution space, we extend the formulation, by introducing a predicate that manipulates the circuit given in input and parallelizes telegates' tasks.
\end{abstract}

\section{Introduction}
\label{sec:introduction}
Distributed architectures are envisioned as a long-term solution to provide practical applications of quantum computing \cite{CacCalTaf-19,CuoCalCac-20,VanDev-16}. They offer a physical substrate to scale up horizontally computing resources, rather than relying on vertical scale-up, coming from single hardware advancement. 
On the flip side, by linking distributed quantum processors, several new challenges arise \cite{Kim-08,PirBra-16,DurLamHeu-17,WehElkHan-18,Cas-18,CacCalTaf-19,CuoCalCac-20}. Here, we consider the \textit{compilation problem} \cite{FerCacAmo-21}, which is generally tough to solve, even on single processor, where an \texttt{NP}-hardness proof is available \cite{BotKisMar-18}. 
An ever growing literature is available, with a variety of proposals for local computing \cite{MasFalMos-08,SirSanCol-18,WilBurZul-19,LiDinXie-19,ZulWil-19,ItoRayIma-19,ZhaZheZha-20,KarTezPet-20,MorParRes-21,MarMorRoc-21, BooDoBec-18,FerAmo-21, LinAnsHar-21, Kon-21} and for distributed computing \cite{BeaBriGra-13,ZomHouHou-18,DaeNavZom-20,FerCacAmo-21, NikMohSed-21,RouZomSar-21, DadZomHos-21,DaeNavZom-21, SarZom-21, ZomDavGho-21, SunGupRam-21}.

Even if quantum processors are already available, distributed architectures are at an early stage and must be discussed from several perspective to grasp what we need. A promising forecast to what such architectures will look like is based on \textit{telegates} as fundamental inter-processor operations.
Each telegate relies on several tasks: (i) the generation and distribution of entangled states among different processors, (ii) local operations and (iii) classical communication. Such tasks make the telegate an expensive resource, especially in terms of running time\footnote{Refer to  \cite{ZhoWanZou-20, KraRanHam-21} for the state of the art on experimental implementations.}. As a consequence, they have critical impact on the performance of the overall computation.
In contrast to such a limit, telegates offer 
remarkable opportunities of parallelization. 
In fact, much circuit manipulation is possible to keep computation independent from telegates' tasks. Therefore, we aim to model an optimization problem that embeds such opportunities.




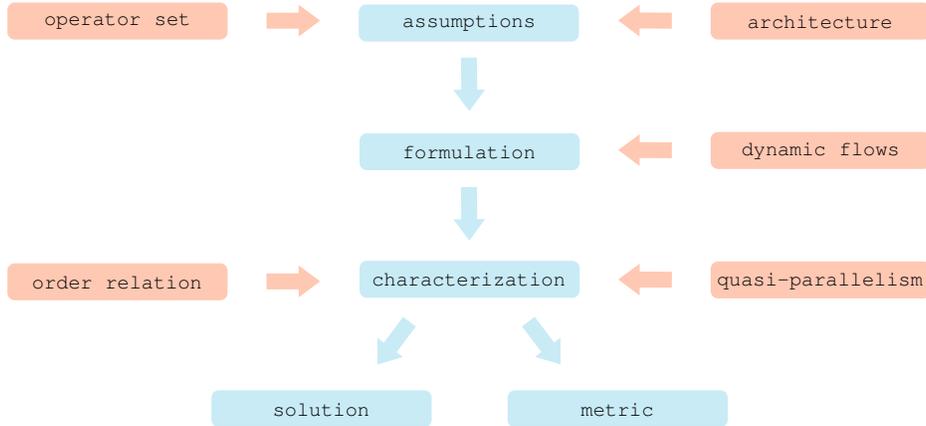
\begin{figure*}
    \centering

\tikzset{every picture/.style={line width=0.75pt}} 

\begin{tikzpicture}[x=0.7pt,y=0.7pt,yscale=-1,xscale=1]

\draw  [color={rgb, 255:red, 0; green, 0; blue, 0 }  ,draw opacity=0 ][fill={rgb, 255:red, 255; green, 71; blue, 0 }  ,fill opacity=0.3 ] (460.75,159.67) .. controls (460.75,157.46) and (462.54,155.67) .. (464.75,155.67) -- (575.75,155.67) .. controls (577.96,155.67) and (579.75,157.46) .. (579.75,159.67) -- (579.75,171.67) .. controls (579.75,173.88) and (577.96,175.67) .. (575.75,175.67) -- (464.75,175.67) .. controls (462.54,175.67) and (460.75,173.88) .. (460.75,171.67) -- cycle ;
\draw  [color={rgb, 255:red, 0; green, 0; blue, 0 }  ,draw opacity=0 ][fill={rgb, 255:red, 255; green, 71; blue, 0 }  ,fill opacity=0.3 ] (439.77,169.84) -- (422.38,169.77) -- (422.36,174.08) -- (410.81,165.41) -- (422.44,156.84) -- (422.42,161.15) -- (439.81,161.22) -- cycle ;
\draw  [color={rgb, 255:red, 0; green, 0; blue, 0 }  ,draw opacity=0 ][fill={rgb, 255:red, 255; green, 71; blue, 0 }  ,fill opacity=0.3 ] (439.77,99.84) -- (422.38,99.77) -- (422.36,104.08) -- (410.81,95.41) -- (422.44,86.84) -- (422.42,91.15) -- (439.81,91.22) -- cycle ;
\draw  [color={rgb, 255:red, 0; green, 0; blue, 0 }  ,draw opacity=0 ][fill={rgb, 255:red, 255; green, 71; blue, 0 }  ,fill opacity=0.3 ] (460.75,89.67) .. controls (460.75,87.46) and (462.54,85.67) .. (464.75,85.67) -- (575.75,85.67) .. controls (577.96,85.67) and (579.75,87.46) .. (579.75,89.67) -- (579.75,101.67) .. controls (579.75,103.88) and (577.96,105.67) .. (575.75,105.67) -- (464.75,105.67) .. controls (462.54,105.67) and (460.75,103.88) .. (460.75,101.67) -- cycle ;
\draw  [color={rgb, 255:red, 0; green, 0; blue, 0 }  ,draw opacity=0 ][fill={rgb, 255:red, 255; green, 71; blue, 0 }  ,fill opacity=0.3 ] (460.75,19.67) .. controls (460.75,17.46) and (462.54,15.67) .. (464.75,15.67) -- (575.75,15.67) .. controls (577.96,15.67) and (579.75,17.46) .. (579.75,19.67) -- (579.75,31.67) .. controls (579.75,33.88) and (577.96,35.67) .. (575.75,35.67) -- (464.75,35.67) .. controls (462.54,35.67) and (460.75,33.88) .. (460.75,31.67) -- cycle ;
\draw  [color={rgb, 255:red, 0; green, 0; blue, 0 }  ,draw opacity=0 ][fill={rgb, 255:red, 255; green, 71; blue, 0 }  ,fill opacity=0.3 ] (439.75,29.77) -- (422.36,29.69) -- (422.34,34) -- (410.79,25.33) -- (422.42,16.76) -- (422.4,21.07) -- (439.79,21.14) -- cycle ;
\draw  [color={rgb, 255:red, 0; green, 0; blue, 0 }  ,draw opacity=0 ][fill={rgb, 255:red, 74; green, 189; blue, 226 }  ,fill opacity=0.3 ] (365.69,185.59) -- (376.16,199.48) -- (379.6,196.88) -- (379.7,211.33) -- (365.83,207.26) -- (369.28,204.67) -- (358.81,190.78) -- cycle ;
\draw  [color={rgb, 255:red, 0; green, 0; blue, 0 }  ,draw opacity=0 ][fill={rgb, 255:red, 74; green, 189; blue, 226 }  ,fill opacity=0.3 ] (301.47,190.83) -- (291.06,204.76) -- (294.51,207.34) -- (280.66,211.46) -- (280.7,197.01) -- (284.15,199.6) -- (294.57,185.67) -- cycle ;
\draw  [color={rgb, 255:red, 0; green, 0; blue, 0 }  ,draw opacity=0 ][fill={rgb, 255:red, 74; green, 189; blue, 226 }  ,fill opacity=0.3 ] (334.45,115.39) -- (334.59,132.78) -- (338.9,132.74) -- (330.38,144.41) -- (321.66,132.88) -- (325.97,132.85) -- (325.83,115.46) -- cycle ;
\draw  [color={rgb, 255:red, 0; green, 0; blue, 0 }  ,draw opacity=0 ][fill={rgb, 255:red, 74; green, 189; blue, 226 }  ,fill opacity=0.3 ] (334.45,45.39) -- (334.59,62.78) -- (338.9,62.74) -- (330.38,74.41) -- (321.66,62.88) -- (325.97,62.85) -- (325.83,45.46) -- cycle ;
\draw  [color={rgb, 255:red, 0; green, 0; blue, 0 }  ,draw opacity=0 ][fill={rgb, 255:red, 74; green, 189; blue, 226 }  ,fill opacity=0.3 ] (191,229.33) .. controls (191,227.12) and (192.79,225.33) .. (195,225.33) -- (306,225.33) .. controls (308.21,225.33) and (310,227.12) .. (310,229.33) -- (310,241.33) .. controls (310,243.54) and (308.21,245.33) .. (306,245.33) -- (195,245.33) .. controls (192.79,245.33) and (191,243.54) .. (191,241.33) -- cycle ;
\draw  [color={rgb, 255:red, 0; green, 0; blue, 0 }  ,draw opacity=0 ][fill={rgb, 255:red, 74; green, 189; blue, 226 }  ,fill opacity=0.3 ] (351,229.33) .. controls (351,227.12) and (352.79,225.33) .. (355,225.33) -- (466,225.33) .. controls (468.21,225.33) and (470,227.12) .. (470,229.33) -- (470,241.33) .. controls (470,243.54) and (468.21,245.33) .. (466,245.33) -- (355,245.33) .. controls (352.79,245.33) and (351,243.54) .. (351,241.33) -- cycle ;
\draw  [color={rgb, 255:red, 0; green, 0; blue, 0 }  ,draw opacity=0 ][fill={rgb, 255:red, 74; green, 189; blue, 226 }  ,fill opacity=0.3 ] (271,159.33) .. controls (271,157.12) and (272.79,155.33) .. (275,155.33) -- (386,155.33) .. controls (388.21,155.33) and (390,157.12) .. (390,159.33) -- (390,171.33) .. controls (390,173.54) and (388.21,175.33) .. (386,175.33) -- (275,175.33) .. controls (272.79,175.33) and (271,173.54) .. (271,171.33) -- cycle ;
\draw  [color={rgb, 255:red, 0; green, 0; blue, 0 }  ,draw opacity=0 ][fill={rgb, 255:red, 74; green, 189; blue, 226 }  ,fill opacity=0.3 ] (271,90.33) .. controls (271,88.12) and (272.79,86.33) .. (275,86.33) -- (386,86.33) .. controls (388.21,86.33) and (390,88.12) .. (390,90.33) -- (390,102.33) .. controls (390,104.54) and (388.21,106.33) .. (386,106.33) -- (275,106.33) .. controls (272.79,106.33) and (271,104.54) .. (271,102.33) -- cycle ;
\draw  [color={rgb, 255:red, 0; green, 0; blue, 0 }  ,draw opacity=0 ][fill={rgb, 255:red, 74; green, 189; blue, 226 }  ,fill opacity=0.3 ] (270.6,20.73) .. controls (270.6,18.52) and (272.39,16.73) .. (274.6,16.73) -- (385.6,16.73) .. controls (387.81,16.73) and (389.6,18.52) .. (389.6,20.73) -- (389.6,32.73) .. controls (389.6,34.94) and (387.81,36.73) .. (385.6,36.73) -- (274.6,36.73) .. controls (272.39,36.73) and (270.6,34.94) .. (270.6,32.73) -- cycle ;
\draw  [color={rgb, 255:red, 0; green, 0; blue, 0 }  ,draw opacity=0 ][fill={rgb, 255:red, 255; green, 71; blue, 0 }  ,fill opacity=0.3 ] (220.75,162.19) -- (238.14,162.14) -- (238.13,157.83) -- (249.75,166.41) -- (238.18,175.07) -- (238.17,170.76) -- (220.78,170.81) -- cycle ;
\draw  [color={rgb, 255:red, 0; green, 0; blue, 0 }  ,draw opacity=0 ][fill={rgb, 255:red, 255; green, 71; blue, 0 }  ,fill opacity=0.3 ] (220.75,21.19) -- (238.14,21.14) -- (238.13,16.83) -- (249.75,25.41) -- (238.18,34.07) -- (238.17,29.76) -- (220.78,29.81) -- cycle ;
\draw  [color={rgb, 255:red, 0; green, 0; blue, 0 }  ,draw opacity=0 ][fill={rgb, 255:red, 255; green, 71; blue, 0 }  ,fill opacity=0.3 ] (80.75,160.67) .. controls (80.75,158.46) and (82.54,156.67) .. (84.75,156.67) -- (195.75,156.67) .. controls (197.96,156.67) and (199.75,158.46) .. (199.75,160.67) -- (199.75,172.67) .. controls (199.75,174.88) and (197.96,176.67) .. (195.75,176.67) -- (84.75,176.67) .. controls (82.54,176.67) and (80.75,174.88) .. (80.75,172.67) -- cycle ;
\draw  [color={rgb, 255:red, 0; green, 0; blue, 0 }  ,draw opacity=0 ][fill={rgb, 255:red, 255; green, 71; blue, 0 }  ,fill opacity=0.3 ] (80.75,19.67) .. controls (80.75,17.46) and (82.54,15.67) .. (84.75,15.67) -- (195.75,15.67) .. controls (197.96,15.67) and (199.75,17.46) .. (199.75,19.67) -- (199.75,31.67) .. controls (199.75,33.88) and (197.96,35.67) .. (195.75,35.67) -- (84.75,35.67) .. controls (82.54,35.67) and (80.75,33.88) .. (80.75,31.67) -- cycle ;

\draw (330.5,96.33) node  [font=\fontsize{0.77em}{0.92em}\selectfont] [align=left] {{\fontfamily{pcr}\selectfont formulation}};
\draw (330.5,165.33) node  [font=\fontsize{0.77em}{0.92em}\selectfont] [align=left] {{\fontfamily{pcr}\selectfont characterization}};
\draw (250.5,235.33) node  [font=\fontsize{0.77em}{0.92em}\selectfont] [align=left] {{\fontfamily{pcr}\selectfont solution}};
\draw (410.5,235.33) node  [font=\fontsize{0.77em}{0.92em}\selectfont] [align=left] {{\fontfamily{pcr}\selectfont metric}};
\draw (520.25,165.67) node  [font=\fontsize{0.77em}{0.92em}\selectfont] [align=left] {{\fontfamily{pcr}\selectfont quasi-parallelism}};
\draw (520.25,95.67) node  [font=\fontsize{0.77em}{0.92em}\selectfont] [align=left] {{\fontfamily{pcr}\selectfont dynamic flows}};
\draw (330.5,26.33) node  [font=\fontsize{0.77em}{0.92em}\selectfont] [align=left] {{\fontfamily{pcr}\selectfont assumptions}};
\draw (520.25,25.67) node  [font=\fontsize{0.77em}{0.92em}\selectfont] [align=left] {{\fontfamily{pcr}\selectfont architecture}};
\draw (140.25,25.67) node  [font=\fontsize{0.77em}{0.92em}\selectfont] [align=left] {{\fontfamily{pcr}\selectfont operator set}};
\draw (140.25,166.67) node  [font=\fontsize{0.77em}{0.92em}\selectfont] [align=left] {{\fontfamily{pcr}\selectfont order relation}};

\end{tikzpicture}
    \caption{Manuscript overview. Blue blocks denote the steps in the problem modeling, scanned by blue arrows. Red blocks are the main ingredients to the entry blue blocks.}
    \label{fig:overview}
\end{figure*}

\subsection{Contribution}
We give a deep analysis on what we can do to mitigate the overhead caused by telegates, which are the main bottleneck to computation on distributed architectures.

Figure \ref{fig:overview} gives a step by step overview of our work, with particular attention to the problem modeling. As reasonable, we begin with some minimal \textbf{assumptions} (Section \ref{sec:essentials}). Namely, as computation model we consider quantum circuits with a universal operator set available. The set is based on local operations and, as said above, telegates as fundamental inter-processor operations. Here, we optimize telegates to efficiently scale with connectivity restrictions.  

We move on by defining rigorously the problem (Section \ref{sec:problem}). To come up with our \textbf{formulation} we rely on a wide literature from the Operation Research field, dealing with network scenarios. Specifically, we noticed several analogies between our problem and those on dynamic networks, especially the group of \textit{multi-commodity flow} problems \cite{FleSku-02,ForFul-58, FulFor-58, SriSta-00, ChoCho-06, CheKhaShe-06, CheKhaShe-04, Sri-97, ChaCheGup-07, CicConPas-21}.
The resulting formulation is particularly remarkable, as it well models an interest into minimizing the running-time, by also keeping as side objective the minimization of resource usage.
In this step, formulation is abstract, as it relies on binary relations that are not fully characterized. We believe that this enhances the modularity of the work and its readability. In fact, exploring the solution space is a combination of resource-availability checks and circuit manipulation, the latter is an hard task on its own and deserves dedicated discussion.
For this reason, the further step is the \textbf{characterization} of those binary relations (Section \ref{sec:relations}). Namely, through these relations, it is possible to discriminate among feasible and unfeasible manipulations. 
First, we relate all the operations to follow the logic induced by the order of occurrence. After that, we also relate operations to discriminate whether they can run in parallel or not. 
This works leads us to generalize the concept of parallelism to a new proposal of ours: the \textit{quasi-parallelism}. This relation is based on (automated) circuit manipulation which aims to gather telegates within the same time step.
The final outcome is a full characterization of the \textbf{solution} space. We evaluate the quality of the solutions available with the quasi-parallelism against no quasi-parallelism -- up to an exponential improvement in the objective function --.
As conclusion, we notice that the final objective function related to a calculated solution results to be a \textbf{metric} to the running-time of computation. This makes our model particularly interesting from a practical perspective.



\begin{table}[b]
\centering
\begin{tabular}{cl}
\hline
{Notation} & {Description} \\ \hline
    $[n]$     &     An enumeration set $\{1, 2, \dots, n\}$        \\
\rowcolor[HTML]{EFEFEF} 
    \texttt{O}      & Font mainly used to denote operators\\
    $\Delta_{\texttt{O}}$    & Time to run operator \texttt{O}\\
    \rowcolor[HTML]{EFEFEF} $\mathcal{N}, \mathcal{Q}$ & Network and quotient graphs\\
    $\mathcal{L}$  & Circuit encoding\\
    \rowcolor[HTML]{EFEFEF} $d$ & Circuit depth\\
    $\tau$ & Discrete time step\\
    \rowcolor[HTML]{EFEFEF} $\star, \prec, \shortparallel$  & Binary relations\\
    $f$ & Flow function\\
    \rowcolor[HTML]{EFEFEF} $q_u$ &  $u$-th computation qubit\\
    $c_u$ &  $u$-th communication qubit\\
    \rowcolor[HTML]{EFEFEF} $b_u$ &  $u$-th classical bit\\
    $P_i$ &  $i$-th quantum processor
\end{tabular}

\end{table}

\section{Distributed quantum computing essentials} 
\label{sec:essentials}
In this section we describe the main elements,  featuring a distributed quantum architecture.

One can encode a quantum processor as a set of qubits and a set of sparse tuneable couplings among qubits. If two qubits are coupled it means that they can interact. We will refer to such couplings as \textit{local couplings}, to emphasize they belong to the same node in distributed architectures, 
as opposed to \textit{entanglement links}, that are couplings between qubits in different processors. As detailed in next sub-section, two remote qubits coupled through an entanglement link, cannot be used for computation: consequently, it is useful to classify qubits as either \textit{computation qubits} or \textit{communication qubits}, respectively. 
While computation qubits process information during the  computation, the communication qubits couple distinct processors through the entanglement.
Figure \ref{fig:3qpu} shows a toy architecture. The purple lines represent the couplings among distributed processors. We refer to such lines as \textit{entanglement links}, as detailed in next sub-section.

\subsection{The entanglement link}
To couple two processors, a communication protocol is necessary, known as \textit{entanglement generation and distribution} \cite{CacCalTaf-19, CacCalVan-20, CuoCalCac-20}. We describe it here as three main steps:
\begin{enumerate}
    \item generating a two-qubits maximally entangled state\footnote{The two-qubits assumption is general and can be extended to multi-qubits protocols.};
    \item splitting the state between distributed processors\footnote{This step implies communication. The interested reader can find in \cite{CacCalVan-20} three different protocols achieving the task.};
    \item storing the partial states in the communication qubits.
\end{enumerate}
When the protocol succeeds, the distributed qubits are correlated and can be exploited to perform non-local operations. For this reason we consider this correlation as a virtual link,
which we refer to as \textit{entanglement link}\footnote{The interested reader can find a discussion about how to achieve practical entanglement generation and distribution, via heralded-based protocols, at Ref. \cite{KozWehVan-21}.}. 
Entanglement links extend the possible interactions to any distributed computation qubits. Specifically, since the communication qubits are locally coupled with computation qubits, with entanglement links one can perform operations between distributed computation qubits, referred to as \textit{telegates}.
More details on the functioning of telegates are reported in Section \ref{sec:rcx_func}. However it is important to keep in mind that, to perform a remote operation, one has to measure the states stored in the communication qubits. As a consequence, an entanglement link is a depletable resource, assigned to a single remote operation. 
After the measurement, a new round of entanglement generation and distribution takes place.

We now give a mathematical description of a distributed architecture, in order to formally describe the functioning of telegates.

\begin{figure}
    \centering

\tikzset{every picture/.style={line width=0.75pt}} 

\begin{tikzpicture}[x=0.75pt,y=0.75pt,yscale=-1,xscale=1]

\draw [color={rgb, 255:red, 189; green, 16; blue, 224 }  ,draw opacity=1 ]   (70,124.25) -- (85,124.25) ;
\draw  [fill={rgb, 255:red, 74; green, 144; blue, 226 }  ,fill opacity=0.5 ][dash pattern={on 3.75pt off 2.25pt on 7.5pt off 1.5pt}] (187.57,99.88) .. controls (187.57,98.11) and (189.01,96.67) .. (190.78,96.67) -- (200.78,96.67) .. controls (202.56,96.67) and (204,98.11) .. (204,99.88) -- (204,109.53) .. controls (204,111.31) and (202.56,112.75) .. (200.78,112.75) -- (190.78,112.75) .. controls (189.01,112.75) and (187.57,111.31) .. (187.57,109.53) -- cycle ;
\draw  [fill={rgb, 255:red, 255; green, 255; blue, 255 }  ,fill opacity=1 ] (71.22,104.5) .. controls (71.22,100.52) and (74.45,97.29) .. (78.43,97.29) .. controls (82.41,97.29) and (85.64,100.52) .. (85.64,104.5) .. controls (85.64,108.48) and (82.41,111.71) .. (78.43,111.71) .. controls (74.45,111.71) and (71.22,108.48) .. (71.22,104.5) -- cycle ;
\draw [color={rgb, 255:red, 189; green, 16; blue, 224 }  ,draw opacity=1 ]   (76.88,48) -- (145.79,47.64) ;
\draw [color={rgb, 255:red, 189; green, 16; blue, 224 }  ,draw opacity=1 ]   (76.88,78) -- (145.79,77.64) ;
\draw  [fill={rgb, 255:red, 74; green, 144; blue, 226 }  ,fill opacity=0.5 ][dash pattern={on 4.5pt off 4.5pt}][line width=0.75]  (27.07,47.27) .. controls (27.07,41.6) and (31.66,37) .. (37.33,37) -- (71.47,37) .. controls (77.14,37) and (81.73,41.6) .. (81.73,47.27) -- (81.73,78.07) .. controls (81.73,83.74) and (77.14,88.33) .. (71.47,88.33) -- (37.33,88.33) .. controls (31.66,88.33) and (27.07,83.74) .. (27.07,78.07) -- cycle ;
\draw  [fill={rgb, 255:red, 255; green, 255; blue, 255 }  ,fill opacity=1 ] (32.46,78) .. controls (32.46,74.02) and (35.69,70.79) .. (39.67,70.79) .. controls (43.65,70.79) and (46.88,74.02) .. (46.88,78) .. controls (46.88,81.98) and (43.65,85.21) .. (39.67,85.21) .. controls (35.69,85.21) and (32.46,81.98) .. (32.46,78) -- cycle ;
\draw    (39.67,55.21) -- (39.67,70.79) ;
\draw  [fill={rgb, 255:red, 255; green, 255; blue, 255 }  ,fill opacity=1 ] (32.46,48) .. controls (32.46,44.02) and (35.69,40.79) .. (39.67,40.79) .. controls (43.65,40.79) and (46.88,44.02) .. (46.88,48) .. controls (46.88,51.98) and (43.65,55.21) .. (39.67,55.21) .. controls (35.69,55.21) and (32.46,51.98) .. (32.46,48) -- cycle ;
\draw  [fill={rgb, 255:red, 255; green, 255; blue, 255 }  ,fill opacity=1 ] (62.46,78) .. controls (62.46,74.02) and (65.69,70.79) .. (69.67,70.79) .. controls (73.65,70.79) and (76.88,74.02) .. (76.88,78) .. controls (76.88,81.98) and (73.65,85.21) .. (69.67,85.21) .. controls (65.69,85.21) and (62.46,81.98) .. (62.46,78) -- cycle ;
\draw  [fill={rgb, 255:red, 255; green, 255; blue, 255 }  ,fill opacity=1 ] (62.46,48) .. controls (62.46,44.02) and (65.69,40.79) .. (69.67,40.79) .. controls (73.65,40.79) and (76.88,44.02) .. (76.88,48) .. controls (76.88,51.98) and (73.65,55.21) .. (69.67,55.21) .. controls (65.69,55.21) and (62.46,51.98) .. (62.46,48) -- cycle ;
\draw    (62.46,48) -- (46.88,48) ;
\draw [color={rgb, 255:red, 189; green, 16; blue, 224 }  ,draw opacity=1 ]   (190.21,47.64) -- (249.79,47.97) ;
\draw  [fill={rgb, 255:red, 74; green, 144; blue, 226 }  ,fill opacity=0.5 ][dash pattern={on 4.5pt off 4.5pt}] (140.4,46.91) .. controls (140.4,41.24) and (145,36.64) .. (150.67,36.64) -- (184.8,36.64) .. controls (190.47,36.64) and (195.07,41.24) .. (195.07,46.91) -- (195.07,77.71) .. controls (195.07,83.38) and (190.47,87.97) .. (184.8,87.97) -- (150.67,87.97) .. controls (145,87.97) and (140.4,83.38) .. (140.4,77.71) -- cycle ;
\draw  [fill={rgb, 255:red, 255; green, 255; blue, 255 }  ,fill opacity=1 ] (145.79,77.64) .. controls (145.79,73.66) and (149.02,70.43) .. (153,70.43) .. controls (156.98,70.43) and (160.21,73.66) .. (160.21,77.64) .. controls (160.21,81.62) and (156.98,84.85) .. (153,84.85) .. controls (149.02,84.85) and (145.79,81.62) .. (145.79,77.64) -- cycle ;
\draw  [fill={rgb, 255:red, 255; green, 255; blue, 255 }  ,fill opacity=1 ] (145.79,47.64) .. controls (145.79,43.66) and (149.02,40.43) .. (153,40.43) .. controls (156.98,40.43) and (160.21,43.66) .. (160.21,47.64) .. controls (160.21,51.62) and (156.98,54.85) .. (153,54.85) .. controls (149.02,54.85) and (145.79,51.62) .. (145.79,47.64) -- cycle ;
\draw  [fill={rgb, 255:red, 255; green, 255; blue, 255 }  ,fill opacity=1 ] (175.79,77.64) .. controls (175.79,73.66) and (179.02,70.43) .. (183,70.43) .. controls (186.98,70.43) and (190.21,73.66) .. (190.21,77.64) .. controls (190.21,81.62) and (186.98,84.85) .. (183,84.85) .. controls (179.02,84.85) and (175.79,81.62) .. (175.79,77.64) -- cycle ;
\draw  [fill={rgb, 255:red, 255; green, 255; blue, 255 }  ,fill opacity=1 ] (175.79,47.64) .. controls (175.79,43.66) and (179.02,40.43) .. (183,40.43) .. controls (186.98,40.43) and (190.21,43.66) .. (190.21,47.64) .. controls (190.21,51.62) and (186.98,54.85) .. (183,54.85) .. controls (179.02,54.85) and (175.79,51.62) .. (175.79,47.64) -- cycle ;
\draw    (183,54.85) -- (183,70.43) ;
\draw  [fill={rgb, 255:red, 74; green, 144; blue, 226 }  ,fill opacity=0.5 ][dash pattern={on 4.5pt off 4.5pt}] (244.4,47.24) .. controls (244.4,41.57) and (249,36.97) .. (254.67,36.97) -- (288.8,36.97) .. controls (294.47,36.97) and (299.07,41.57) .. (299.07,47.24) -- (299.07,78.04) .. controls (299.07,83.71) and (294.47,88.31) .. (288.8,88.31) -- (254.67,88.31) .. controls (249,88.31) and (244.4,83.71) .. (244.4,78.04) -- cycle ;
\draw  [fill={rgb, 255:red, 255; green, 255; blue, 255 }  ,fill opacity=1 ] (249.79,77.97) .. controls (249.79,73.99) and (253.02,70.77) .. (257,70.77) .. controls (260.98,70.77) and (264.21,73.99) .. (264.21,77.97) .. controls (264.21,81.96) and (260.98,85.18) .. (257,85.18) .. controls (253.02,85.18) and (249.79,81.96) .. (249.79,77.97) -- cycle ;
\draw    (257,55.18) -- (257,70.77) ;
\draw  [fill={rgb, 255:red, 255; green, 255; blue, 255 }  ,fill opacity=1 ] (249.79,47.97) .. controls (249.79,43.99) and (253.02,40.77) .. (257,40.77) .. controls (260.98,40.77) and (264.21,43.99) .. (264.21,47.97) .. controls (264.21,51.96) and (260.98,55.18) .. (257,55.18) .. controls (253.02,55.18) and (249.79,51.96) .. (249.79,47.97) -- cycle ;
\draw  [fill={rgb, 255:red, 255; green, 255; blue, 255 }  ,fill opacity=1 ] (279.79,77.97) .. controls (279.79,73.99) and (283.02,70.77) .. (287,70.77) .. controls (290.98,70.77) and (294.21,73.99) .. (294.21,77.97) .. controls (294.21,81.96) and (290.98,85.18) .. (287,85.18) .. controls (283.02,85.18) and (279.79,81.96) .. (279.79,77.97) -- cycle ;
\draw [color={rgb, 255:red, 0; green, 0; blue, 0 }  ,draw opacity=1 ]   (189,124.25) -- (204,124.25) ;
\draw    (175.79,77.64) -- (160.21,77.64) ;
\draw    (279.79,47.97) -- (264.21,47.97) ;
\draw    (62.46,78) -- (46.88,78) ;
\draw    (175.79,47.64) -- (160.21,47.64) ;
\draw    (153,54.85) -- (153,70.43) ;
\draw  [fill={rgb, 255:red, 255; green, 255; blue, 255 }  ,fill opacity=1 ] (279.79,47.97) .. controls (279.79,43.99) and (283.02,40.77) .. (287,40.77) .. controls (290.98,40.77) and (294.21,43.99) .. (294.21,47.97) .. controls (294.21,51.96) and (290.98,55.18) .. (287,55.18) .. controls (283.02,55.18) and (279.79,51.96) .. (279.79,47.97) -- cycle ;
\draw    (287,55.18) -- (287,70.77) ;

\draw (85,124.25) node [anchor=west] [inner sep=0.75pt]  [font=\tiny] [align=left] {\textit{{\fontfamily{pcr}\selectfont \textbf{Entanglement link}}}};
\draw (196.78,104.71) node [anchor=west] [inner sep=0.75pt]  [font=\tiny] [align=left] {\textit{{\fontfamily{pcr}\selectfont \textbf{ \ Processor}}}};
\draw (78.43,104.5) node [anchor=west] [inner sep=0.75pt]  [font=\tiny] [align=left] {\textit{{\fontfamily{pcr}\selectfont \textbf{ \ Qubit}}}};
\draw (54.4,62.67) node  [font=\scriptsize]  {$P_{0}$};
\draw (39.67,48) node  [font=\tiny]  {$q_{1}$};
\draw (69.67,48) node  [font=\tiny]  {$c_{1}$};
\draw (39.67,78) node  [font=\tiny]  {$q_{2}$};
\draw (69.67,78) node  [font=\tiny]  {$c_{2}$};
\draw (167.73,62.31) node  [font=\scriptsize]  {$P_{2}$};
\draw (153,47.64) node  [font=\tiny]  {$c_{3}$};
\draw (183,47.64) node  [font=\tiny]  {$c_{5}$};
\draw (153,77.64) node  [font=\tiny]  {$c_{4}$};
\draw (183,77.64) node  [font=\tiny]  {$q_{3}$};
\draw (271.73,62.64) node  [font=\scriptsize]  {$P_{3}$};
\draw (257,47.97) node  [font=\tiny]  {$c_{6}$};
\draw (257,77.97) node  [font=\tiny]  {$q_{4}$};
\draw (287,77.97) node  [font=\tiny]  {$q_{5}$};
\draw (205,124.25) node [anchor=west] [inner sep=0.75pt]  [font=\tiny] [align=left] {\textit{{\fontfamily{pcr}\selectfont \textbf{Local coupling}}}};
\draw (287,47.97) node  [font=\tiny]  {$q_{6}$};

\end{tikzpicture}
    \caption{Toy distributed quantum architecture with 3 processors.}
    \label{fig:3qpu}
\end{figure}
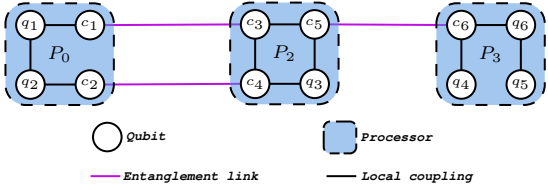

\subsection{Mathematical description}
\label{sec:graph}
So far, we presented the main elements occurring in a distributed quantum architecture, which we can now represent mathematically. Formally, let $\mathcal{N} = (V,P,E)$ be a network triple representing the architecture.
$V = Q \cup C$ is a set of nodes describing qubits, therefore it is the union of computation qubits $Q = \{q_1, q_2,\dots, q_{|Q|}\}$ and communication qubits $C = \{c_1, c_2,\dots, c_{|C|}\}$. We can represent $n$ processors by partitioning $V$ into $P = \{P_1, P_2, \dots, P_n\}$. Therefore, a sub-set $P_i$ characterizes a processor as its set of qubits/nodes.

$E = L \cup R$ is as a set of undirected edges. $L$ represents the local couplings, therefore
\[L\subseteq \bigcup_i P_i\times P_i.\]
Notice that there is no particular assumption on connectivity nor cardinality within processors. This keeps the treating hardware-independent and allows for heterogeneous architectures.

$R$ represents entanglement links. Since entanglement links connect only communication qubits, we introduce, for each processor, a set of those qubits only; i.e. $C_i = C\cap P_i$. Therefore, we have 
\[R \subseteq \bigcup_{i,j\ :\ i \neq j} C_i \times C_j.\]

Figure \ref{fig:3qpu} shows an exemplary architecture, with three processors in $P$, six computation qubits in $Q$, six communication qubits in $C$, three entanglement links in $R$ and ten local couplings in $L$.

As regard minimal assumptions, we only care about architectures actually able to perform any operation. This translated into a simple connection assumption.


\section{Operators}
In the following, the gate model architecture of quantum coputers is considered. There, a circuit describes a time-ordered quantum evolution as a sequence of quantum gates consisting of unitary operators. The set of available operators depends on the physical implementations.





\subsection{Computation operators}
In order to achieve universal quantum computing, one may rely on a universal set of quantum logic gates capable to approximate any possible unitary operator. In the following, we consider a representative universal set of quantum gates, without loss of generality. A sufficient and compact assumption for local universal quantum computing consists of the three-operators set $\{\texttt{H},\texttt{T},\texttt{CX}\}$, where $\texttt{H}$ is the Hadamard operator and $\texttt{T}$ is the $\frac{\pi}{4}$-phase shift. With a polynomial number of repetitions of $\texttt{H}$ and $\texttt{T}$ one can approximate any unitary operator with arbitrary precision \cite{Kit-97, NieChu-02}. 
Other choices of universal sets are possible, such as those based on trapped ions in a cavity \cite{akerman2015universal}, suitable for quantum interfaces where the photonic state is transferred to the cavity mode, and then to the electronic state of the ion via laser pulses \cite{dur2017towards}.


To sum up, the universal operator set for \textbf{local} quantum computing we consider is $\{\texttt{H}, \texttt{T}, \texttt{CX}\}$\footnote{That means a logical circuit will be composed by operators coming from that set only, w.l.o.g.}. 
Whenever an operator occurs with subscript, we are giving information about the qubits it is operating on, e.g., \texttt{CX}$_{u,v}$ is a \texttt{CX} operator with control qubit $q_u$ and target qubit $q_v$. 

\subsection{Universal set}
\label{sec:rcx_func}
To extend the universality also to distributed architectures, we need at least one remote operator. Since the \texttt{CX} is the only operator involving more than one qubit, we just need to implement an operator equivalent to \texttt{CX}, but applying to distributed computation qubits. We call this operator an \texttt{RCX}. Clearly, \texttt{CX} and \texttt{RCX} are equivalent, but with their different nomenclature we highlight their physical difference. Specifically, while \texttt{CX} represents a local gate, \texttt{RCX} represents a sequence of operations that involves distant qubits. Therefore, in general, implementations of \texttt{CX} and \texttt{RCX} come with different fidelity, latency and required resources.

Specifically to the \texttt{RCX} functioning, this is based on a several fundamental steps, which we describe, in turn, by using operators. 
The first operator models the entanglement link creation; we refer to that as \texttt{E} or, more explicitly, as \texttt{E}$_{w,r}$. It sets qubits $c_w$ and $c_r$ to the maximally entangled state $\ket{\Phi^+}$. 
The second operator models a measurement for a communication qubit $c_w$, over the computational basis. Namely, the measurements outputs a classical binary variable $b_w \in \{0,1\}$. 
We refer to that as \texttt{M}$_{w}$ and with the circuit component
\[\begin{quantikz}[thin lines]
    \lstick[]{$c_w$}&\meter[]{}\rstick[]{$b_w$}
\end{quantikz}\]

 \begin{figure*}[ht]
    \centering
    \begin{quantikz}[thin lines,row sep={0.8cm,between origins},transparent]
		\lstick[]{$q_u$}&\qw\gategroup[wires=2,steps=5,style={dashed,rounded corners,inner xsep=11pt,inner ysep=-0.3pt, fill=green!10}, background, label style={rounded corners,label position=above,  yshift=0.08cm, fill=green!10}, background]{$P_i$} & \ctrl{1} & \qw &\qw & \gate[style={fill=green!10}]{\texttt{Z}^{b_r}} &  \qw \\
		\lstick[]{$c_w$}& \gate[2,label style={yshift=0.2cm}]{\texttt{E}} & \targ{} & \qw & \meter[style={fill=green!10}]{} \rstick{$b_w$} \\
		\lstick[]{$c_r$}&\gategroup[wires=2,steps=5,style={dashed,rounded corners,inner xsep=11pt,inner ysep=-0.3pt, fill=red!10}, background,label style={rounded corners,label position=below, yshift=-0.5cm, fill=red!10}, background]{$P_j$} & \ctrl{1} & \gate[style={fill=red!10}]{\texttt{H}} & \meter[style={fill=red!10}]{} \rstick{$b_r$}\\
		\lstick[]{$q_v$} & \qw & \targ{} & \qw & \qw & \gate[style={fill=red!10}]{\texttt{X}^{b_w}} & \qw
	\end{quantikz}
	$\ \ \equiv$
	\begin{quantikz}[thin lines,row sep={0.8cm,between origins}]
        \lstick[]{$q_u$}&\ctrl{1}\gategroup[wires=1,steps=1,style={dashed,rounded corners,inner xsep=11pt,inner ysep=6pt, fill=green!10}, background, label style={rounded corners,label position=above,  yshift=0.08cm, fill=green!10}, background]{$P_i$} & \qw\\
        \lstick[]{$q_v$}& \targ{}\gategroup[wires=1,steps=1,style={dashed,rounded corners,inner xsep=11pt,inner ysep=3pt, fill=red!10}, background,label style={rounded corners,label position=below, yshift=-0.5cm, fill=red!10}, background]{$P_j$} & \qw
    \end{quantikz}
	\caption{Protocol performing an \texttt{RCX}$_{u,v}$. From an operator point of view, this is equivalent to perform $\texttt{CX}_{u,v}$. However $u$ and $v$ belong different processors and that is why we use notation \texttt{RCX}. 
	}
	\label{fig:remop}
\end{figure*}
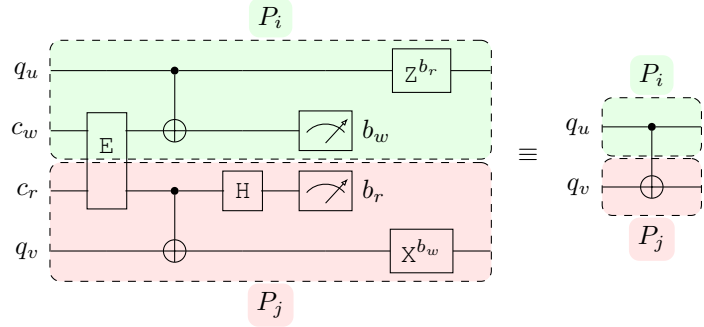
\begin{figure*}
    \centering
    \begin{quantikz}[thin lines,row sep={0.8cm,between origins}, transparent]
		\lstick[]{$c_u$}&\gate[wires=2,label style={yshift=-0.2cm}]{\texttt{E}}\gategroup[wires=1,steps=5,style={dashed,rounded corners,inner xsep=11pt,inner ysep=-0.3pt, fill=green!10}, background, label style={rounded corners,label position=above, xshift =-0.5cm, yshift=0.08cm, fill=green!10}, background]{$P_i$}& \qw & \qw & \qw & \gate[style={fill=green!10}]{\texttt{Z}^{b_v}} & \qw \\
		\lstick[]{$c_v$}& \gategroup[wires=2,steps=5,style={dashed,rounded corners,inner xsep=11pt,inner ysep=-0.3pt, fill=blue!10}, background,label style={rounded corners,label position=below, xshift=0.2cm,yshift=2.45cm, fill=blue!10}, background]{$P_k$} & \ctrl{1} & \gate[style={fill=blue!10}]{\texttt{H}} & \meter[style={fill=blue!10}]{}\rstick[]{$b_v$} \\
		\lstick[]{$c_w$}&\gate[wires=2,label style={yshift=0.2cm}]{\texttt{E}}& \targ{} & \qw & \meter[style={fill=blue!10}]{} \rstick[]{$b_w$}\\
		\lstick[]{$c_r$}&\gategroup[wires=1,steps=5,style={dashed,rounded corners,inner xsep=11pt,inner ysep=-0.3pt, fill=red!10}, background,label style={rounded corners,label position=below,xshift =0.9cm, yshift=3.3cm, fill=red!10}, background]{$P_j$}& \qw & \qw & \qw & \gate[style={fill=red!10}]{\texttt{X}^{b_w}} & \qw
	\end{quantikz}
	$\ \ \equiv$
	\begin{quantikz}[thin lines,row sep={0.8cm,between origins}, transparent]
	    \lstick[]{$c_u$}&\gate[wires=2,label style={yshift=0.2cm}]{\texttt{E}}\gategroup[wires=1,steps=1,style={dashed,rounded corners,inner xsep=11pt,inner ysep=0.25pt, fill=green!10}, background, label style={rounded corners,label position=above, yshift=0.08cm, fill=green!10}, background]{$P_i$} & \qw\\
	    \lstick[]{$c_r$}&\gategroup[wires=1,steps=1,style={dashed,rounded corners,inner xsep=11pt,inner ysep=0.25pt, fill=red!10}, background,label style={rounded corners,label position=below, yshift=-0.5cm, fill=red!10}, background]{$P_j$}& \qw
	\end{quantikz}
    \caption{Entanglement swap protocol. This scenario has three processors $P_i, P_k, P_j$. $P_k$ has an entanglement link both with $P_i$ and with $P_j$, created respectively by $\texttt{E}_{u,v}$ and $\texttt{E}_{w,r}$. At the end of the protocol $c_u$ and $c_r$ are in the maximal entangled state $\ket{\Phi^+}$. From an operator point of view, this is equivalent to perform $\texttt{E}_{u,r}$.}
    \label{fig:swap}
\end{figure*}
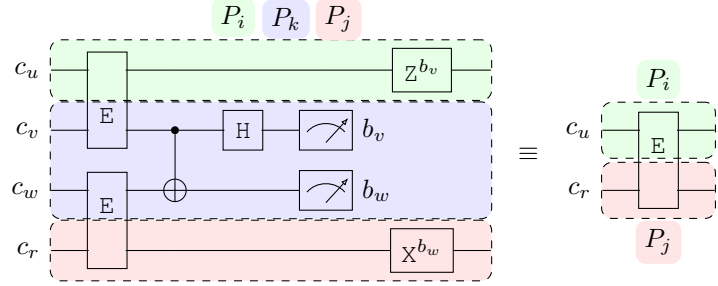

Figure \ref{fig:remop} shows a possible realization of a generic \texttt{RCX}$_{u,v}$.
Here, there are two qubits $q_u,c_w \in P_i$ and two qubits $q_v,c_r \in P_j$. Let us separate the protocol in three different steps.
The first one is the creation of the entanglement link between $c_w$ and $c_r$, i.e. applying \texttt{E}$_{w,r}$. After that, the second step is the \textbf{pre-processing}: a few local operations occur and then qubits $c_w,c_r$ are measured, getting $b_w$ and $b_r$ respectively.
The final step is the \textbf{post-processing}.
The binary variables are used to assert whether further operations are required. Specifically, if $b_r = 1$, a Pauli $\texttt{Z}$ operator applies to $q_u$ and, if $b_w = 1$, a Pauli $\texttt{X}$ operator applies to $q_v$. This phase can be compactly referred with the $\texttt{Z}^{b_r}_{u}, \texttt{X}^{b_w}_{v}$ operators. Notice that $b_w$ is local to processor $P_i$ and $b_r$ is local to $P_j$. But $P_i$ uses $b_r$ and $P_j$ uses $b_w$. In other words, a cross classical communication occurs between $P_i$ and $P_j$.

Let us now give a look to some exemplary applications of \texttt{RCX}$_{u,v}$ over the toy architecture of Figure \ref{fig:3qpu}.

\begin{exmp}
   Assume one wants to run an \texttt{RCX} with control qubit $q_2$ and target $q_3$ -- i.e. \texttt{RCX}$_{2,3}$. Just run circuit in Fig.~\ref{fig:remop}, with $u=2, v=3, w=2, r=4$. 
\end{exmp}

\begin{exmp}
   Now assume one wants to run \texttt{RCX}$_{1,3}$. In this case we can still use the entanglement link between $c_2$ and $c_4$. However, qubit $q_1$ is not coupled with $c_2$. To use that link we need to swap the states stored in $q_1$ and $q_2$ before and after running \texttt{CX}.
   
\end{exmp}

What happens if one wants to run, say, \texttt{RCX}$_{1,4}$? In such a case, the qubits belong two processors having no entanglement link coupling them. There is a really efficient protocol to overcome this problem: it is called \textit{entanglement swap} and we describe it within next section.

\subsection{The entanglement swap}
\label{sec:swap}
As pointed out before, it might be the case where one wants to run an \texttt{RCX} operator between a couple of qubits belonging processors with no entanglement link. Formally, let $P_i$ and $P_j$ such processors and $R \cap (C_i\times C_j) = \emptyset$. In the basic scenario, there exists an intermediate processor $P_k$ which has an entanglement link with both $P_i$ and $P_j$, say via four communication qubits such that $c_u \in C_i$, $c_v,c_w \in C_k$ and $c_r \in C_j$. As Figure \ref{fig:swap} shows, we exploit $P_k$ to entangle $c_u$ and $c_r$.

The entanglement swap protocol can be generalized to an arbitrary sequence of intermediate processors. To this aim we introduce the concept of \textit{entanglement path}.
\subsubsection{The entanglement path}
Coherently with the standard definition of path of a graph, an entanglement path is a sequence of entanglement links connecting two processors. Formally, an entanglement path is a sequence $\{P_{i_1}, P_{i_2}, \dots, P_{i_m}\}$ of $m$ processors such that, $\forall j \in [m-1]$, there is an entanglement link between $P_{i_j}$ and $P_{i_{j+1}}$.

We can therefore entangle two communication qubits $c_u \in P_{i_1}$ and $c_r \in P_{i_m}$ by applying a generalization of the entanglement swap -- showed in Appendix \ref{apx:path} -- to $\{P_{i_1}, P_{i_2}, \dots, P_{i_m}\}$.

Since at the end of the protocol $c_u$ and $c_r$ are in the entangled state $\ket{\Phi^+}$, an entanglement path is a generalization of an entanglement link.

\subsubsection{\texttt{RCX} with entanglement path}
\label{sec:rcx_gen}
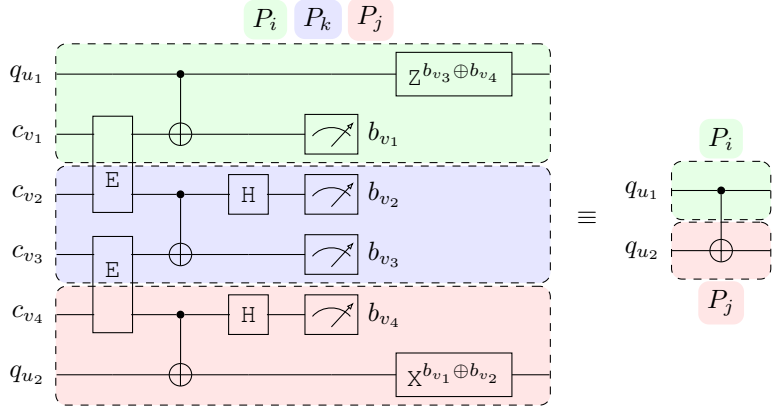
\begin{figure*}
    \centering
    \begin{quantikz}[thin lines,row sep={0.8cm,between origins},transparent]
        \lstick[]{$q_{u_1}$} &\qw\gategroup[wires=2,steps=5,style={dashed,rounded corners,inner xsep=11pt,inner ysep=-0.3pt, fill=green!10}, background, label style={rounded corners,label position=above, xshift =-0.5cm, yshift=0.1cm, fill=green!10}, background]{$P_i$} & \ctrl{1}&\qw & \qw &\gate[]{\texttt{Z}^{b_{v_3} \oplus b_{v_4}}} & \qw\\
		\lstick[]{$c_{v_1}$}&\gate[wires=2,label style={yshift=-0.2cm}]{\texttt{E}}& \targ{} & \qw & \meter[]{}\rstick[]{$b_{v_1}$} \\
		\lstick[]{$c_{v_2}$}& \gategroup[wires=2,steps=5,style={dashed,rounded corners,inner xsep=11pt,inner ysep=-0.3pt, fill=blue!10}, background,label style={rounded corners,label position=below, xshift=0.2cm,yshift=3.3cm, fill=blue!10}, background]{$P_k$}& \ctrl{1} & \gate{\texttt{H}} & \meter{}\rstick[]{$b_{v_2}$}\\
		\lstick[]{$c_{v_3}$}&\gate[wires=2,label style={yshift=0.2cm}]{\texttt{E}}& \targ{} & \qw & \meter{} \rstick[]{$b_{v_3}$}\\
		\lstick[]{$c_{v_4}$}&\gategroup[wires=2,steps=5,style={dashed,rounded corners,inner xsep=11pt,inner ysep=-0.3pt, fill=red!10}, background,label style={rounded corners,label position=below,xshift =0.9cm, yshift=4.95cm, fill=red!10}, background]{$P_j$}& \ctrl{1} & \gate{\texttt{H}} & \meter[]{}\rstick[]{$b_{v_4}$}\\
		\lstick[]{$q_{u_2}$}& \qw & \targ{} & \qw & \qw & \gate[]{\texttt{X}^{b_{v_1}\oplus b_{v_2}}} & \qw
	\end{quantikz}
    $\ \ \equiv$
	\begin{quantikz}[thin lines,row sep={0.8cm,between origins}]
        \lstick[]{$q_{u_1}$}&\ctrl{1}\gategroup[wires=1,steps=1,style={dashed,rounded corners,inner xsep=11pt,inner ysep=6pt, fill=green!10}, background, label style={rounded corners,label position=above,  yshift=0.08cm, fill=green!10}, background]{$P_i$} & \qw\\
        \lstick[]{$q_{u_2}$}& \targ{}\gategroup[wires=1,steps=1,style={dashed,rounded corners,inner xsep=11pt,inner ysep=3pt, fill=red!10}, background,label style={rounded corners,label position=below, yshift=-0.5cm, fill=red!10}, background]{$P_j$} & \qw
    \end{quantikz}
    \caption{\texttt{RCX}$_{u_1, u_2}$ with entanglement swap. 
    }
    \label{fig:rcxswap}
\end{figure*}
In our scenario, the purpose of applying entanglement swap is to perform \texttt{RCX}. For this reason it is interesting to note that we can combine the entanglement swap protocol together with the protocol for \texttt{RCX}. The result is showed in Figure \ref{fig:rcxswap}. 
This result generalizes to every path, no matter the length -- see Appendix \ref{apx:path} --. We further discuss within next section the latency implications coming from this result.



\section{Distributed quantum circuit compilation problem}
\label{sec:problem}
Usually, in the literature dealing with compiler design \cite{ZulWil-19, ItoRayIma-19, WilBurZul-19, FerCacAmo-21}, a circuit is encoded as a set of \textit{layers}. Formally, a layer is a set $\ell$ of independent operators, meaning that each operator in $\ell$ acts on a different collection of qubits. A circuit is an enumeration of layers $\mathcal{L} = \{\ell_1, \ell_2, \dots, \ell_{|\mathcal{L}|}\}$, where the cardinality is also commonly referred as circuit \textit{depth}. 


Usually, a quantum programmer writes a logical circuit, abstracting from the real architecture and assuming that qubits are fully connected, i.e., any couple of qubits can perform a \texttt{CX} operation directly.
Such an abstraction holds also when stepping to distributed architectures\footnote{Recall that, from a user perspective, $\texttt{CX}\equiv \texttt{RCX}$.}. 

However, NISQ architectures do not provide full coupling. As a consequence, there must be a software interface -- namely, a compiler -- able to map abstract circuits to an equivalent one, but meeting the real coupling. In general, such a mapping implies overhead in terms of circuit depth. Therefore, finding a mapping with minimum depth overhead is an optimization problem. We refer to it as the \textit{quantum circuit compilation} problem (\texttt{QCC}), which is proved to be \texttt{NP}-hard \cite{BotKisMar-18}. 
Its version on distributed architectures, 
which we refer to as the \textit{distributed quantum circuit compilation} problem (\texttt{DQCC}), is likely 
to be at least as hard as \texttt{QCC}. In fact, while in \texttt{QCC} we deal with local connectivity restrictions, in \texttt{DQCC} local connectivity stands alongside with remote connectivity -- i.e. the entanglement links --, which is less dense than the local one\footnote{Because the more communication qubits there are, the less computing resources are available.}. Furthermore, performing a remote operation is much more time consuming than a local operation. Just consider that a remote operation relies on communication of both quantum and classical information. 

The above reasons make telegates the bottleneck in distributed computing. Therefore, they are worth of dedicated analysis to minimize their impact. 

\subsection{Objective function}
\label{sec:of}
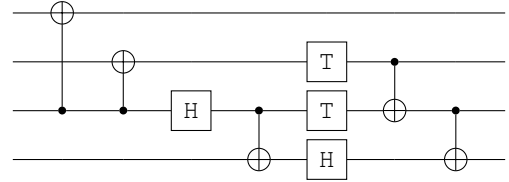
\begin{figure}[hb]
    \centering
    \begin{quantikz}[thin lines,row sep={0.65cm,between origins}, column sep={0.5cm}]
        &\targ{} & \qw& \qw& \qw& \qw& \qw& \qw& \qw\\
        &\qw & \targ{} & \qw &  \qw & \gate{\texttt{T}} & \ctrl{1}& \qw& \qw \\
        & \ctrl{-2} & \ctrl{-1} & \gate[]{\texttt{H}} & \ctrl{1} & \gate[]{\texttt{T}} & \targ{} & \ctrl{1}& \qw\\
        & \qw & \qw & \qw & \targ{} & \gate[]{\texttt{H}} & \qw  & \targ{}& \qw
    \end{quantikz}
    \caption{Exemplary logical circuit, assuming universal gate set $\{\texttt{H}, \texttt{T}, \texttt{CX}, \texttt{RCX}\}$. In the \texttt{DQCC}, part of the instance is a logical circuit, like this one. Depending on the assignment physical-logical qubit, some of the two-qubits operators will be \texttt{CX}, others will be \texttt{RCX}.}
    \label{fig:exmp}
\end{figure}
To optimize a circuit, the first thing we need to do is choosing an objective function to rate the expected performance of a circuit. A common approach is to evaluate only those operators which are somehow a bottleneck to computation. For example, in fault-tolerant quantum computing \cite{Got-98}, this is the $\texttt{T}$ operator \cite{Sel-13,AmyMasMos-14}\footnote{Since main efficient fault-tolerant techniques works only for $\{\texttt{H}, \texttt{CX}\}$.}; in experiments on current technologies, this is the $\texttt{CX}$ operator\footnote{Since it involves interaction of two qubits, it is the more likely operator bringing noise.}. One can count the total number of occurrence of the subject operator $\texttt{O}$, which is the $\texttt{O}$-count; alternatively, one can count the number of layers where at least one operator is $\texttt{O}$, which is the $\texttt{O}$-depth.
To rate a compiled circuit on distributed architectures, we do something along the lines of this latter approach. Specifically, the main bottleneck is the $\texttt{RCX}$ operator and each $\texttt{RCX}$ implies one occurrence of $\texttt{E}$. Therefore, we will rate a circuit with its \texttt{E}-depth.





As simple example of \texttt{E}-depth, consider an instance of the problem: a logical circuit where some \texttt{RCX} operators occur. Figure \ref{fig:exmp} shows an exemplary one. Let us put in the worst-case scenario, i.e., all the four qubits go\footnote{Assigning logical qubits to physical ones is another critical step for compilation and it deserves dedicated analysis \cite{AndHeu-19,DavZomHou-20}, out of the scope of this work.} to different processors. Consequently, all the two-qubits operators are \texttt{RCX}. 
Without considering the tasks which $\texttt{RCX}$ relies on, there is not much optimization to do and the $\texttt{E}$-depth is $5$.

\subsection{Modeling the time domain}
\label{sec:time}
It should be clear that $\texttt{E}$ has central interest in our treating. In fact, we are also going to model the time by scanning it as $\texttt{E}$ occurs.
Specifically, notice that link generations among different couples of qubits are independent. For this reason we assume that all the possible links generate simultaneously and, as soon as all the states are measured, a new round of simultaneous generations begins. Clearly, after that a measurement \texttt{M} generates a boolean $b$, there is at least one post-processing operator that need to wait for that boolean to arrive. Generally speaking, the longer the path the more time $b$ takes to reach its destination.  
We need to account for that by a proper model. To this aim, we do some observations.
\begin{remark}Consider a generic single-qubit unitary operator $\texttt{U}$. The time required to perform $\texttt{U}^b$ is largely dominated by the travel time of $b$, whilst the actual time taken by $\texttt{U}$ can be neglected. Furthermore, the travel of $b$ is independent from computation.
Hence, we can compactly refer to the post-processing waiting-time as $\Delta_{\texttt{U}^b}$. A second observation is that the travel of $b$ is also independent by entanglement link creations, which we assume to take time $\Delta_{\texttt{E}}$. It is also logical to assume $\Delta_{\texttt{U}^b} \lesssim \Delta_{\texttt{E}}$ for the following reasoning: even if $b$ need to cover a longer distance than the one covered by \texttt{E}, $b$ relies on classical technologies, which are way more efficient\footnote{The design of a distributed quantum architecture can easily adapt to satisfy requirements coming from assumptions on classical technologies, since these are very advanced.} than entanglement generation and distribution protocols. 
For this reason, in our treating we neglect $\Delta_{\texttt{U}^b}$, since it happens in parallel with $\Delta_{\texttt{E}}$.
\end{remark}


Stemming from this, we can model the time domain as a discrete set of steps $\tau \in \{1,2, \dots, d\}$, where $d$ is an unknown time horizon, which is also the \texttt{E}-depth. At the beginning of each time step $\tau$, the whole set of entanglement links is available for telegates.
Notice that most of the local operators are expected to run during the creation of the links. Because we relate them to the following inequality
\begin{equation}
\label{eq:time}
    \Delta_{\texttt{E}} \gg \Delta_{\texttt{CX}}, \Delta_{\texttt{H}}, \Delta_{\texttt{T}},
\end{equation}
where, for a generic operator \texttt{O}, $\Delta_{\texttt{O}}$ is the time to run $\texttt{O}$. Therefore, since \texttt{E} is independent from local operators, we can always attempt to run these while \texttt{E} is running -- and also while classical bits $b$ are traveling, as explained in Section \ref{sec:rcx_gen} --.



\subsection{Modeling the distributed architecture}
In light of the above observations, it is reasonable and convenient to consider the whole processor as a network node, and define a function $c$ that provides the number of available links between two processors. Formally, we re-formulate the network graph $\mathcal{N} = (V,P,E)$ introduced in subsection \ref{sec:graph}, to a more compact encoding, which highlights the bottlenecks of a distributed quantum architecture. Specifically, we consider a \textit{quotient graph} of $\mathcal{N}$. To define it, we make use of equivalence classes formalism (it will be useful also later on). Let $\star$ be a an equivalence relation defined on the entanglement links in $R$ as follows:
\begin{equation*}
    (c_u,c_v)\star (c_w,c_r) \iff \exists{i,j}\ :\ c_u,c_w \in C_i\ \land\ c_v,c_r \in C_j.
\end{equation*}

The above statement characterizes the set of inter-partition edges, such that when two edges reach common processors, they belong to the same equivalence class, i.e. 
\[[(c_u,c_v)]_\star = \{(c_w,c_r) \in R\ :\ (c_w,c_r)\star (c_u,c_v)\}.\]
Consequently, the edge set of the quotient graph will be:
\begin{equation}
    R/\star = \{[(c_u,c_v)]_\star\ :\ \forall{(c_u,c_v)} \in R\}.
\end{equation}

The quotient graph is $\mathcal{Q} = (P, R/\star)$. We also associate, to the edges of $\mathcal{Q}$, a capacity function $c: R/\star \rightarrow \mathbb{N}$, such that $c([(c_u,c_v)]_{\star}) = |[(c_u,c_v)]_{\star}|$. It will tell us how many entanglement links are available between two processors.

Ultimately, we define the equivalence class defined on computation qubits, induced by partition $P$, i.e.:
\[q_u P q_v \iff \exists i\ :\ q_u,q_v \in P_i.\]
This is useful to recognize the processor, i.e. the partition, which a generic qubit $q_u$ belongs to, namely $[q_u]_P$.
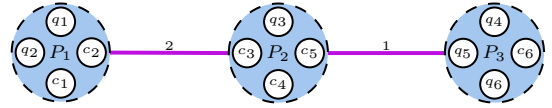
\begin{figure}
    \centering

\tikzset{every picture/.style={line width=0.75pt}} 

\begin{tikzpicture}[x=0.75pt,y=0.75pt,yscale=-1,xscale=1]

\draw [color={rgb, 255:red, 189; green, 16; blue, 224 }  ,draw opacity=1 ][line width=1.5]    (189,192.5) -- (248,192.5) ;
\draw [color={rgb, 255:red, 189; green, 16; blue, 224 }  ,draw opacity=1 ][line width=1.5]    (79,192.25) -- (139,192.5) ;
\draw  [color={rgb, 255:red, 0; green, 0; blue, 0 }  ,draw opacity=1 ][fill={rgb, 255:red, 74; green, 144; blue, 226 }  ,fill opacity=0.5 ][dash pattern={on 4.5pt off 4.5pt}][line width=0.75]  (29.5,192.25) .. controls (29.5,178.58) and (40.58,167.5) .. (54.25,167.5) .. controls (67.92,167.5) and (79,178.58) .. (79,192.25) .. controls (79,205.92) and (67.92,217) .. (54.25,217) .. controls (40.58,217) and (29.5,205.92) .. (29.5,192.25) -- cycle ;
\draw  [fill={rgb, 255:red, 255; green, 255; blue, 255 }  ,fill opacity=1 ] (47.04,176.71) .. controls (47.04,172.73) and (50.27,169.5) .. (54.25,169.5) .. controls (58.23,169.5) and (61.46,172.73) .. (61.46,176.71) .. controls (61.46,180.69) and (58.23,183.92) .. (54.25,183.92) .. controls (50.27,183.92) and (47.04,180.69) .. (47.04,176.71) -- cycle ;
\draw  [color={rgb, 255:red, 0; green, 0; blue, 0 }  ,draw opacity=1 ][fill={rgb, 255:red, 74; green, 144; blue, 226 }  ,fill opacity=0.5 ][dash pattern={on 4.5pt off 4.5pt}] (139,192.5) .. controls (139,178.69) and (150.19,167.5) .. (164,167.5) .. controls (177.81,167.5) and (189,178.69) .. (189,192.5) .. controls (189,206.31) and (177.81,217.5) .. (164,217.5) .. controls (150.19,217.5) and (139,206.31) .. (139,192.5) -- cycle ;
\draw  [color={rgb, 255:red, 0; green, 0; blue, 0 }  ,draw opacity=1 ][fill={rgb, 255:red, 74; green, 144; blue, 226 }  ,fill opacity=0.5 ][dash pattern={on 4.5pt off 4.5pt}] (248,192.5) .. controls (248,178.69) and (259.19,167.5) .. (273,167.5) .. controls (286.81,167.5) and (298,178.69) .. (298,192.5) .. controls (298,206.31) and (286.81,217.5) .. (273,217.5) .. controls (259.19,217.5) and (248,206.31) .. (248,192.5) -- cycle ;
\draw  [fill={rgb, 255:red, 255; green, 255; blue, 255 }  ,fill opacity=1 ] (47.04,207.79) .. controls (47.04,203.81) and (50.27,200.58) .. (54.25,200.58) .. controls (58.23,200.58) and (61.46,203.81) .. (61.46,207.79) .. controls (61.46,211.77) and (58.23,215) .. (54.25,215) .. controls (50.27,215) and (47.04,211.77) .. (47.04,207.79) -- cycle ;
\draw  [fill={rgb, 255:red, 255; green, 255; blue, 255 }  ,fill opacity=1 ] (156.79,176.71) .. controls (156.79,172.73) and (160.02,169.5) .. (164,169.5) .. controls (167.98,169.5) and (171.21,172.73) .. (171.21,176.71) .. controls (171.21,180.69) and (167.98,183.92) .. (164,183.92) .. controls (160.02,183.92) and (156.79,180.69) .. (156.79,176.71) -- cycle ;
\draw  [fill={rgb, 255:red, 255; green, 255; blue, 255 }  ,fill opacity=1 ] (265.79,176.71) .. controls (265.79,172.73) and (269.02,169.5) .. (273,169.5) .. controls (276.98,169.5) and (280.21,172.73) .. (280.21,176.71) .. controls (280.21,180.69) and (276.98,183.92) .. (273,183.92) .. controls (269.02,183.92) and (265.79,180.69) .. (265.79,176.71) -- cycle ;
\draw  [fill={rgb, 255:red, 255; green, 255; blue, 255 }  ,fill opacity=1 ] (265.79,208.29) .. controls (265.79,204.31) and (269.02,201.08) .. (273,201.08) .. controls (276.98,201.08) and (280.21,204.31) .. (280.21,208.29) .. controls (280.21,212.27) and (276.98,215.5) .. (273,215.5) .. controls (269.02,215.5) and (265.79,212.27) .. (265.79,208.29) -- cycle ;
\draw  [fill={rgb, 255:red, 255; green, 255; blue, 255 }  ,fill opacity=1 ] (281.58,192.5) .. controls (281.58,188.52) and (284.81,185.29) .. (288.79,185.29) .. controls (292.77,185.29) and (296,188.52) .. (296,192.5) .. controls (296,196.48) and (292.77,199.71) .. (288.79,199.71) .. controls (284.81,199.71) and (281.58,196.48) .. (281.58,192.5) -- cycle ;
\draw  [fill={rgb, 255:red, 255; green, 255; blue, 255 }  ,fill opacity=1 ] (31.5,192.25) .. controls (31.5,188.27) and (34.73,185.04) .. (38.71,185.04) .. controls (42.69,185.04) and (45.92,188.27) .. (45.92,192.25) .. controls (45.92,196.23) and (42.69,199.46) .. (38.71,199.46) .. controls (34.73,199.46) and (31.5,196.23) .. (31.5,192.25) -- cycle ;
\draw  [fill={rgb, 255:red, 255; green, 255; blue, 255 }  ,fill opacity=1 ] (62.58,192.25) .. controls (62.58,188.27) and (65.81,185.04) .. (69.79,185.04) .. controls (73.77,185.04) and (77,188.27) .. (77,192.25) .. controls (77,196.23) and (73.77,199.46) .. (69.79,199.46) .. controls (65.81,199.46) and (62.58,196.23) .. (62.58,192.25) -- cycle ;
\draw  [fill={rgb, 255:red, 255; green, 255; blue, 255 }  ,fill opacity=1 ] (141,192.5) .. controls (141,188.52) and (144.23,185.29) .. (148.21,185.29) .. controls (152.19,185.29) and (155.42,188.52) .. (155.42,192.5) .. controls (155.42,196.48) and (152.19,199.71) .. (148.21,199.71) .. controls (144.23,199.71) and (141,196.48) .. (141,192.5) -- cycle ;
\draw  [fill={rgb, 255:red, 255; green, 255; blue, 255 }  ,fill opacity=1 ] (156.79,208.29) .. controls (156.79,204.31) and (160.02,201.08) .. (164,201.08) .. controls (167.98,201.08) and (171.21,204.31) .. (171.21,208.29) .. controls (171.21,212.27) and (167.98,215.5) .. (164,215.5) .. controls (160.02,215.5) and (156.79,212.27) .. (156.79,208.29) -- cycle ;
\draw  [fill={rgb, 255:red, 255; green, 255; blue, 255 }  ,fill opacity=1 ] (172.58,192.5) .. controls (172.58,188.52) and (175.81,185.29) .. (179.79,185.29) .. controls (183.77,185.29) and (187,188.52) .. (187,192.5) .. controls (187,196.48) and (183.77,199.71) .. (179.79,199.71) .. controls (175.81,199.71) and (172.58,196.48) .. (172.58,192.5) -- cycle ;
\draw  [fill={rgb, 255:red, 255; green, 255; blue, 255 }  ,fill opacity=1 ] (250,192.5) .. controls (250,188.52) and (253.23,185.29) .. (257.21,185.29) .. controls (261.19,185.29) and (264.42,188.52) .. (264.42,192.5) .. controls (264.42,196.48) and (261.19,199.71) .. (257.21,199.71) .. controls (253.23,199.71) and (250,196.48) .. (250,192.5) -- cycle ;

\draw (54.25,192.25) node  [font=\scriptsize]  {$P_{1}$};
\draw (164,192.5) node  [font=\scriptsize]  {$P_{2}$};
\draw (109,192.38) node [anchor=south] [inner sep=0.75pt]  [font=\tiny]  {$2$};
\draw (273,192.5) node  [font=\scriptsize]  {$P_{3}$};
\draw (218.5,192.5) node [anchor=south] [inner sep=0.75pt]  [font=\tiny]  {$1$};
\draw (54.25,176.71) node  [font=\tiny]  {$q_{1}$};
\draw (38.71,192.25) node  [font=\tiny]  {$q_{2}$};
\draw (164,176.71) node  [font=\tiny]  {$q_{3}$};
\draw (273,176.71) node  [font=\tiny]  {$q_{4}$};
\draw (257.21,192.5) node  [font=\tiny]  {$q_{5}$};
\draw (273,208.29) node  [font=\tiny]  {$q_{6}$};
\draw (54.25,207.79) node  [font=\tiny]  {$c_{1}$};
\draw (69.79,192.25) node  [font=\tiny]  {$c_{2}$};
\draw (148.21,192.5) node  [font=\tiny]  {$c_{3}$};
\draw (164,208.29) node  [font=\tiny]  {$c_{4}$};
\draw (179.79,192.5) node  [font=\tiny]  {$c_{5}$};
\draw (288.79,192.5) node  [font=\tiny]  {$c_{6}$};

\end{tikzpicture}
    \caption{Quotient graph derived from toy network of Figure \ref{fig:3qpu}. The processors become the nodes, the entanglement links between a couple of processors are "compressed" into one edge, with capacity equal to the number of original links. 
    }
    \label{fig:quotient}
	\hrulefill
\end{figure}

In Figure \ref{fig:quotient} we show the quotient graph of the toy architecture of Figure \ref{fig:3qpu}. 



\begin{figure*}[ht]
\centering
   \begin{mini}
    {}{f = \sum_{e \in R/\star}\sum_{i \in [k]}\sum_{\tau \in [d]}{f_{e,i}(\tau)}}{}{}
    \label{mcf}
    \addConstraint{\sum_{e \in \delta^{-}(P_j)}f_{e,i}(\tau) - \sum_{e \in \delta^{+}(P_j)} f_{e,i}(\tau) = 0}{}{\forall i \in [k], \forall{\tau}\in [d], \forall P_j \in P\smallsetminus\{P^C_i,P^T_i\}}
    \addConstraint{\sum_{e \in \delta^{-}(P_i^C)}\sum_{\tau \in [d]} f_{e,i}(\tau) - \sum_{e \in \delta^{+}(P_i^C)}\sum_{\tau\in [d]} f_{e,i}(\tau) = +1\ \ \ }{}{\forall i \in [k]} \addConstraint{\sum_{e \in \delta^{-}(P_i^T)}\sum_{\tau \in [d]} f_{e,i}(\tau) - \sum_{e \in \delta^{+}(P_i^T)}\sum_{\tau\in [d]} f_{e,i}(\tau) = -1\ \ \ }{}{\forall i \in [k]}
    \addConstraint{\sum_{i \in [k]}{f_{e,i}(\tau)} \leq c(e)}{}{\forall e \in R/\star,  \forall \tau \in [d]}  
    \end{mini}
\end{figure*}

\subsection{Single layer formulation}
\label{sec:layer}
Consider a basic circuit expressed as the singleton $\mathcal{L} = \{\ell\}$.
Assume that in $\ell$ there occur $k$ \texttt{RCX} operators. From a logical perspective, all the $k$ operators can run in parallel -- by definition of layer --. 
In other words, if the architecture connectivity had infinite capacity -- i.e. $c(e) = \infty,\ \forall e \in R/\star$ --, we could run $\mathcal{L}$ with \texttt{E}-depth $1$, that is optimal.
As the capacity values decrease, the optimal \texttt{E}-depth value grows, up to \texttt{E}-depth $k$ in the worst-case.
For instance, consider a quotient graph as the one in Figure \ref{fig:2qpu}; it represents a generic 2-processors architecture. In such a simple case, there is not much margin for optimization. Namely, the \texttt{E}-depth depends on the capacity $c(P_1,P_2)$ and the circuit. Take $k$ operations occurring in the same layer. Whenever $c(P_1,P_2) \geq k$, the \texttt{E}-depth is $1$. As the capacity goes below $k$, the depth increases, up to $k$ for $c(P_1,P_2) = 1$.
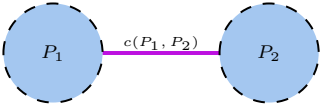
\begin{figure}[t]
    \centering

\tikzset{every picture/.style={line width=0.75pt}} 

\begin{tikzpicture}[x=0.75pt,y=0.75pt,yscale=-1,xscale=1]

\draw [color={rgb, 255:red, 189; green, 16; blue, 224 }  ,draw opacity=1 ][line width=1.5]    (461,193.5) -- (520,193.5) ;
\draw  [color={rgb, 255:red, 0; green, 0; blue, 0 }  ,draw opacity=1 ][fill={rgb, 255:red, 74; green, 144; blue, 226 }  ,fill opacity=0.5 ][dash pattern={on 4.5pt off 4.5pt}] (411,193.5) .. controls (411,179.69) and (422.19,168.5) .. (436,168.5) .. controls (449.81,168.5) and (461,179.69) .. (461,193.5) .. controls (461,207.31) and (449.81,218.5) .. (436,218.5) .. controls (422.19,218.5) and (411,207.31) .. (411,193.5) -- cycle ;
\draw  [color={rgb, 255:red, 0; green, 0; blue, 0 }  ,draw opacity=1 ][fill={rgb, 255:red, 74; green, 144; blue, 226 }  ,fill opacity=0.5 ][dash pattern={on 4.5pt off 4.5pt}] (520,193.5) .. controls (520,179.69) and (531.19,168.5) .. (545,168.5) .. controls (558.81,168.5) and (570,179.69) .. (570,193.5) .. controls (570,207.31) and (558.81,218.5) .. (545,218.5) .. controls (531.19,218.5) and (520,207.31) .. (520,193.5) -- cycle ;

\draw (436,193.5) node  [font=\scriptsize]  {$P_{1}$};
\draw (545,193.5) node  [font=\scriptsize]  {$P_{2}$};
\draw (490.5,193.5) node [anchor=south] [inner sep=0.75pt]  [font=\tiny]  {$c(P_1,P_2)$};

\end{tikzpicture}
    \caption{Generic quotient graph $\mathcal{Q} = (\{P_1, P_2\}, \{(P_1,P_2)\})$ for any two processor architecture.}
    \label{fig:2qpu}
	\hrulefill
\end{figure}
\begin{figure*}[htbp]
\centering
   \begin{mini}
    {}{f = \sum_{e \in R/\star}\sum_{i \in [k]}\sum_{\tau\in [d]}{f_{e,i}(\tau)}}{}{}
    \label{dqcc}
    \addConstraint{\sum_{e \in \delta^{-}(P_j)}f_{e,i}(\tau) - \sum_{e \in \delta^{+}(P_j)} f_{e,i}(\tau) = 0}{}{\forall i \in [k], \forall{\tau}\in [d], \forall P_j \in P\smallsetminus\{P^C_i,P^T_i\}}
    \addConstraint{\sum_{e \in \delta^{-}(P_i^C)}\sum_{\tau \in [d]} f_{e,i}(\tau) - \sum_{e \in \delta^{+}(P_i^C)}\sum_{\tau\in [d]} f_{e,i}(\tau) = +1\ \ \ }{}{\forall i \in [k]} \addConstraint{\sum_{e \in \delta^{-}(P_i^T)}\sum_{\tau \in [d]} f_{e,i}(\tau) - \sum_{e \in \delta^{+}(P_i^T)}\sum_{\tau\in [d]} f_{e,i}(\tau) = -1\ \ \ }{}{\forall i \in [k]}
    \addConstraint{\sum_{i \in [k]}{f_{e,i}(\tau)} \leq c(e)}{}{\forall e \in R/\star, \forall \tau \in [d]}
    \addConstraint{f_{e,i}(\tau) \leq \underset{j \prec i \land j \nshortparallel i}{\text{min}} \sum_{\bar{\tau}< \tau}{f_{e,j}(\bar{\tau})}}{}{\forall i \in [k], \forall e \in \delta^-(P^T_i), \forall \tau \in [d]}
    \addConstraint{f_{e,i}(\tau) \leq \underset{j \prec i \land j \shortparallel i}{\text{min}} \sum_{\bar{\tau}\leq \tau}{f_{e,j}(\bar{\tau})}}{}{\forall i \in [k], \forall e \in \delta^-(P^T_i), \forall \tau \in [d]}
    \end{mini}
\end{figure*}

Let us now formulate an optimization problem for a single-layer, multi-processor architecture -- we will introduce a generalization to any circuit in subsection \ref{sec:layers} --.
Specifically, the \textit{quickest multi-commodity flow} \cite{FleSku-02} wraps that basic scenario. 

In brief, the goal is to find a flow over time which satisfy the constraints imposed by a set of so-called
commodities, which are going to represent the \texttt{RCX} of a quantum circuit. The less time the flow takes, the better. 
To formalize this problem one can directly model an objective function that evaluates a flow by the time it takes. This is an approach employed in \cite{LinJai-14}, but for single commodity. Alternatively, authors in \cite{FleSku-02} propose to start from a formulation of the \textit{multi-commodity flow} problem over time \texttt{MCF}$_d$, where $d$ is a given time horizon\footnote{The choice of using letter $d$ should highlight that the time horizon is going to be the \texttt{E}-depth.}, namely a maximal number of time steps in which the flow is constrained. We prefer this latter way because dynamic flows like \texttt{MCF}$_d$ has been deeply studied since long time ago \cite{ForFul-58, FulFor-58}. Furthermore, this approach has a main drawback, explained at the end of this sub-section, but it doesn't apply to our scenario.

To formulate \texttt{MCF}$_d$, first, we enumerate the occurrences of $\texttt{RCX}$ in $\mathcal{L}$ as a set of commodities $[k] = \{1,2,\dots, k\}$. A set of couples source-sink nodes associates to the commodities. To do that, let $P^C = \{P_1^C, P_2^C, \dots P_k^C\}$ and $P^T = \{P_1^T, P_2^T, \dots P_k^T\}$ be two sets of processors, such that the following holds:
\begin{equation*}
    \texttt{RCX}_{u,v} \in \ell \iff \exists{i}\in [k]\ :\ [q_u]_P = P_i^C, [q_v]_P = P_i^T.
\end{equation*}
Namely, $P^C_i$ ($P^T_i$) is the processor where the control (target) qubit of operation $i$ occurs.

The decision variables of the optimization problem are the time-dependent functions  $f_{e,i}(\tau) \in \{0,1\}$, indicating the flow on edge $e \in R/\star$ dedicated to operation $i \in [k]$ at time $\tau$. The function has a binary co-domain because an operation $i$ uses at most one entanglement link in $e$.

\begin{remark}
When dealing with flows over time, usually a travel-time associates to each edge. Instead, we can assume null travel times, i.e., a flow leaving a source at time $\tau$ reaches the sink at same time $\tau$. The fact that we can model a time-dependent problem in this way is due to the quantum nature of the links, because there is a non-local correlation between linked processors.
This is quite remarkable and may lead new interest into a group of flow problems which are dynamic-static hybrids.
\end{remark}

First, let us introduce the \textit{flow conservation} constraint. Formally, $\forall i\in [k]$, $\forall \tau \in [d]$ and $\forall P_j \in P\smallsetminus\{P^C_i,P^T_i\}$ the following holds:
\begin{equation}
\label{c1}
    \sum_{e \in \delta^{-}(P_j)}f_{e,i}(\tau) - \sum_{e \in \delta^{+}(P_j)} f_{e,i}(\tau) = 0
\end{equation}
where $\delta^-, \delta^+ : P \rightarrow R/\star$ are the standard functions outputting the set of entering and exiting edges of the input node, respectively.

Since a flow $f_{e,i}(\tau) = 1$ identifies the usage of an entanglement link in $e$ to perform $i$, we need to guarantee that the flow going through intermediate links of a path does not stop there. Conversely, whenever an end point of the path occurs in the control or target processor -- i.e. $P^C_i$ or $P^T_i$ --, the \textit{operation demand} -- or \textit{commodity demand} -- constraint holds instead of the conservation constraint.
Namely, $\forall i \in [k]$, this can be written as:

\begin{equation}
\label{c2.1}
\sum_{e \in \delta^{-}(P_i^C)}\sum_{\tau \in [d]} f_{e,i}(\tau) - \sum_{e \in \delta^{+}(P_i^C)}\sum_{\tau \in [d]} f_{e,i}(\tau) = +1
\end{equation}

\begin{equation}
\label{c2.2}
\sum_{e \in \delta^{-}(P_i^T)}\sum_{\tau \in [d]} f_{e,i}(\tau) - \sum_{e \in \delta^{+}(P_i^T)}\sum_{\tau \in [d]} f_{e,i}(\tau) = -1
\end{equation}
The above constraint explicitly requests that a flow dedicate to $i$ reaches its target $P^T_i$, without exiting. Symmetrically, it leaves its control processor $P^C_i$ without returning. Notice that constraint \eqref{c1} forces the operation demand to be satisfied within a single time-step.

The last constraint ensures that, at any time step, the number of operations does not exceed the entanglement resources. Hence, $\forall e \in R/\star$ and $\forall \tau \in [d]$, we introduce a \textit{capacity bound}:
\begin{equation}
    \label{c3}
    \sum_{i \in [k]}{f_{e,i}(\tau)} \leq c(e)
\end{equation}

Ultimately, the objective function is the total flow $f = \sum_{e \in R/\star}\sum_{i \in [k]}\sum_{\tau}{f_{e,i}(\tau)}$.

By gathering the above equations, we obtain the Integer Linear Programming formulation \eqref{mcf}, which models \texttt{MCF}$_d$. 
A flow $f$ perfectly matches a set of entanglement paths used by the telegates. 



Notice that solutions with cycles are in general feasible, but are senseless in our scenario.
By expressing the problem as a minimization of $f$, a solver will avoid any cycle and will try to use as few entanglement links as possible.

Once defined a solver for $\texttt{MCF}_d$, we just need to use it as proposed in \cite{FleSku-02}, namely the solver occurs as sub-routine within a binary research on the minimum time where a feasible solution exists. 
Since the research space is over time, the algorithm is, in general, pseudo-logarithmic. 
Specifically to our case, we already know that the worst solution is where all the operations run in sequence -- i.e. \texttt{E}-depth equal to $k$ --. Therefore, the time horizon is upper-bounded by $k$ and the binary search has $\log{k}$ calls to the sub-routine.
Algorithm \ref{algo:binary} shows the steps.



Unfortunately, standard $\texttt{MCF}_d$ can't catch the whole features of \texttt{DQCC} when considering any $\mathcal{L} = \{\ell_1, \ell_2, \dots, \ell_{|\mathcal{L}|}\}$, we need to consider that operations in $[k]$ are somehow related each other by a logic determined by $\mathcal{L}$. Hence in the following sub-section we are going to model such relations by introducing extra constraints.

\begin{algorithm}
\DontPrintSemicolon
\KwIn{$\mathcal{Q}, [k]$}
\KwOut{$d$}
  $L \leftarrow 1, R \leftarrow k$\;
  \While{$L \leq R$}{
        {
            $\bar{d} \leftarrow \frac{L + R}{2}$\;
            $s \leftarrow$ \texttt{MCF}$_{\bar{d}}(\mathcal{Q}, [k])$\;
            \uIf{$s$ \textbf{is feasible}}
            {
                $d \leftarrow \bar{d}$\;
                $R \leftarrow \bar{d}-1$\;
            }
            \Else
            {
                $L \leftarrow \bar{d} + 1$\;
            }
        }
    }
\caption{Quickest multi-commodity flow}
\label{algo:binary}
\end{algorithm}

\subsection{Any layer formulation}
\label{sec:layers}
As mentioned, the formulation we just gave is not enough to model the \texttt{DQCC} problem to any $\mathcal{L} = \{\ell_1,\ell_2,\dots, \ell_{|\mathcal{L}|}\}$, 
because a circuit generally follows a logic which is related on the order of occurrence given by $\mathcal{L}$. Therefore, even if it might happen that two operations could run in any order, in general this is not true. One needs to define an order relation which is consistent with the logic of the circuit. From an optimization point of view, a critical matter is to choose an order relation which either wraps most of the good solutions or is prone to optimization algorithms. For this reason and for the sake of clarity, we here refer to a generic, irreflexive, order relation $\prec$ defined over $[k]$, without giving it a unique definition.
Formally, for any $i,j \in [k]$, $i\prec j$ means that to run $j$ we need to ensure that $i$ already ran. 

Starting from $\prec$, we can define a constraint to add to formulation \eqref{mcf}.
Namely, $\forall i \in [k], \forall e \in \delta^-(P^T_i)$ the following holds:
\begin{equation}
\label{c4}
    f_{e,i}(\tau) \leq \underset{j \prec i}{\text{min}} \sum_{\bar{\tau}< \tau}{f_{e,j}(\bar{\tau})}
\end{equation}
Notice that the right part of the inequality is a value in $\{0,1\}$ and takes value $1$ only if all the operations logically preceding $i$ already ran.

The formulation now models \texttt{DQCC}. But, within next sub-section, we refine inequality \eqref{c4} to get a better solution space. 


\begin{figure}[hb]
    \centering
    \begin{quantikz}[thin lines,row sep={0.6cm,between origins}]
        &\ctrl{1}\gategroup[wires=2,steps=1,style={dashed,rounded corners,inner xsep=1pt,inner ysep=-0.3pt, fill=red!10}, background, label style={rounded corners,label position=above,  yshift=0.15cm, fill=red!10}, background]{$i$} & \qw & \qw\\
        &\targ{} & \ctrl{1}\gategroup[wires=2,steps=1,style={dashed,rounded corners,inner xsep=1pt,inner ysep=-0.3pt, fill=blue!10}, background, label style={rounded corners,label position=above,  yshift=0.66cm, fill=blue!10}, background]{$j$} & \qw\\
        & \qw & \targ{} & \qw 
    \end{quantikz}
    \caption{\texttt{RCX} in logical conflict as both $i$ and $j$ operate on second qubit. 
    }
    \label{fig:conflict}
\end{figure}
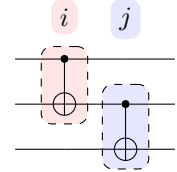

\subsection{Quasi-parallelism}
\label{sec:quasi}
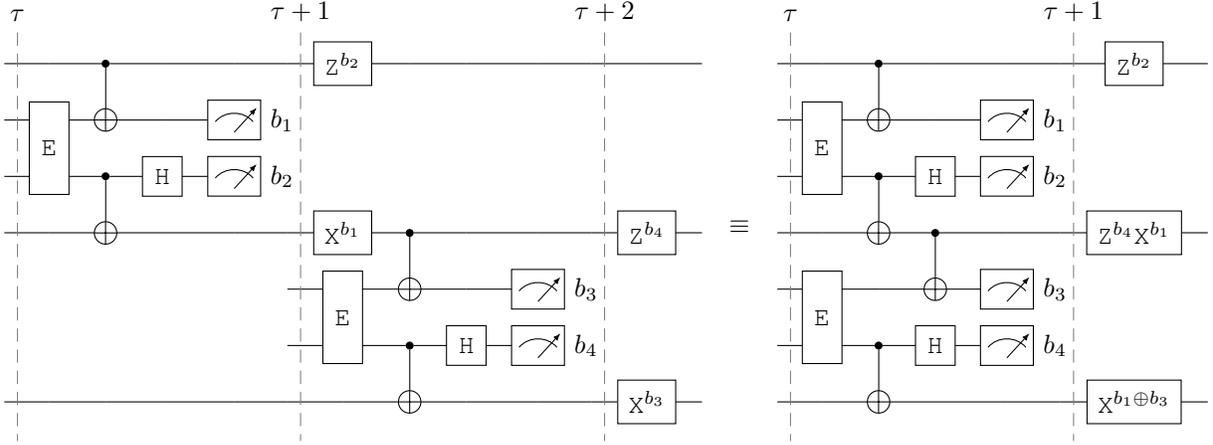
\begin{figure*}
    \centering
    \begin{quantikz}[thin lines,row sep={0.75cm,between origins},column sep=0.35cm]
		\slice[style=gray]{$\tau$}&\qw & \ctrl{1} & \qw & \qw& \qw \slice[style=gray]{$\tau +1$}& \gate{\texttt{Z}^{b_2}} &  \qw & \qw& \qw& \qw\slice[style=gray]{$\tau+2$}& \qw & \qw\\
		& \gate[2]{\texttt{E}} & \targ{} & \qw & \meter{} \rstick{$b_1$} & \\
		& & \ctrl{1} & \gate{\texttt{H}} & \meter{} \rstick{$b_2$} &\\
		& \qw & \targ{} & \qw & \qw & \qw & \gate{\texttt{X}^{b_1}} & \ctrl{1}& \qw& \qw& \qw & \gate{\texttt{Z}^{b_4}}&\qw\\
		& & & & &  &\gate[2]{\texttt{E}} & \targ{} & \qw & \meter{} \rstick{$b_3$} & \\
		& & & & & & & \ctrl{1} & \gate{\texttt{H}} & \meter{} \rstick{$b_4$} &\\
		& \qw& \qw& \qw& \qw& \qw & \qw& \targ{} & \qw & \qw & \qw& \gate{\texttt{X}^{b_3}} & \qw
	\end{quantikz}
	$\ \ \equiv$
    \begin{quantikz}[thin lines,row sep={0.75cm,between origins},column sep=0.35cm]
        \slice[style=gray]{$\tau$}&\qw & \ctrl{1} & \qw &\qw & \qw\slice[style=gray]{$\tau+1$} & \gate{\texttt{Z}^{b_2}} & \qw\\
	    &\gate[wires=2]{\texttt{E}}& \targ{} & \qw & \meter{}\rstick[]{$b_1$} &\\
		&& \ctrl{1} & \gate{\texttt{H}}& \meter{}\rstick[]{$b_2$} & \\
		&\qw &  \targ{} & \ctrl{1} &\qw & \qw & \gate{\texttt{Z}^{b_4}\texttt{X}^{b_1}} & \qw\\
		&\gate[wires=2]{\texttt{E}} & \qw & \targ{}  &\meter{}\rstick[]{$b_3$} &\\
		&& \ctrl{1} & \gate{\texttt{H}} & \meter{}\rstick[]{$b_4$} &\\
		&\qw & \targ{} & \qw & \qw &\qw & \gate{\texttt{X}^{b_1 \oplus b_3}} & \qw
    \end{quantikz}
    \caption{Example of how to achieve quasi-parallelism for two \texttt{RCX} in logical conflict.}
    \label{fig:extended}
\end{figure*}
As before, from an optimization point of view, we are interested in considering as many good solutions as possible. To this aim, we propose an interesting approach which should enlarge the space of good solutions. Specifically, we notice that even if two operations $i,j \in [k]$ are such that $i\prec j$, this does not necessarily mean that they must run at different time steps. They, indeed, may run at the same time step and still respecting the logic imposed by $\prec$.

Consider the example from Figure \ref{fig:conflict}. Since operations $i$ and $j$ operates over a common qubit, they are in logical conflict. Hence, it is reasonable to think that $i\prec j$ should hold. 
However, when considering $i$ and $j$ in their extended form -- i.e. where communication qubits are explicit --, we notice that their logical conflict does not map over all the operations involved.
As Figure \ref{fig:extended} shows, the left part of the equivalence is a naive implementation of $i$ followed by $j$, where the extended form completely inherits the logical conflict. Instead, the right part of the equivalence is way more efficient and it is still an implementation of circuit of Figure \ref{fig:conflict}.
As consequence, even if $i$ and $j$ are in logical conflict, they can run at the same time step.
We refer to this property as \textit{quasi-parallelism}. For this reason we introduce a new binary relation between operations in $[k]$, which we refer to with the intuitive symbol $\shortparallel$. As before, we don't give here a unique definition of $\shortparallel$. Specifically, for any $i,j \in [k]$, we write $i \shortparallel j$ to mean that operations $i$ and $j$ can run at the same time step, but we did not fix a criterion to establish when $\shortparallel$ holds. Clearly, operations $i,j \in [k]$ which can run in full parallelism, are a special case of quasi-parallelism and $i \shortparallel j$ holds.

We can now split the constraint \eqref{c4}, by discriminating between operations which can run in quasi-parallelism and the ones which cannot. Formally, $\forall i \in [k], \forall e \in \delta^-(P^T_i)$ we introduce two new constraints
\begin{equation}
\label{c5}
    f_{e,i}(\tau) \leq \underset{j \prec i \land j \nshortparallel i}{\text{min}} \sum_{\bar{\tau}< \tau}{f_{e,j}(\bar{\tau})}
\end{equation}
\begin{equation}
\label{c6}
    f_{e,i}(\tau) \leq \underset{j \prec i \land j \shortparallel i}{\text{min}} \sum_{\bar{\tau}\leq \tau}{f_{e,j}(\bar{\tau})}
\end{equation}

To sum up, we propose \eqref{dqcc} as Integer Linear Programming formulation of the \texttt{DQCC} problem.
Within next section we fix $\prec$ and $\shortparallel$.


\section{Characterizing the binary relations}
\label{sec:relations}
So far we modeled the problem without completely characterizing relations $\prec$ and $\shortparallel$. This gives a lot of freedom in the way one can tackle the problem. Because, in this way, it is easier to explore different relations, which in turn would bring to different solution spaces.

For practical matters, it is appropriate to keep static definition of $\prec$ and $\shortparallel$, meaning that we do not want the relations to change while the solver is running.

Let us start from $\prec$. We want to make this relation coherent with the order of the layers. More formally, Assume $i,j \in [k]$ and let $\ell_n, \ell_m \in \mathcal{L}$ be such that $i \in \ell_n$ and $j \in \ell_m$. The following holds:
\begin{equation}
    i\prec j \iff n < m.
\end{equation}
With such a requirement, whenever $j$ occurs after $i$ in $\mathcal{L}$, to run $j$ at time $\tau$ we need to run $i$ at a time $\bar{\tau} \leq \tau$ when $i\shortparallel j$ and $\bar{\tau} < \tau$ when $i\nshortparallel j$.
We can now show how impactful our proposal can be. 

\begin{remark}
Consider the following scenario: each layer of $\mathcal{L}$ has at most one remote operation of $[k]$ and all the operation in $[k]$ are in logical conflict one other.
This is interesting from an analytical perspective, since, if we ignored the optimization opportunity offered by quasi-parallelism \footnote{With constraint \eqref{c4} instead of constraints \eqref{c5} and \eqref{c6}.}, the feasible solution space collapses to a singleton. Namely, the only solution is a sequence, where at each time step, only one remote operation occurs, in accordance with $\prec$. The \texttt{E}-depth is then exactly $k$, which is the worse achievable. Instead, by introducing the quasi-parallelism, as we did in formulation \eqref{dqcc}, the space of feasible solutions expands.
Now consider the case where all the operations of $[k]$ can run in quasi-parallelism -- i.e. $\forall i,j \in [k],\ i \shortparallel j$ --. This means that, up to connectivity availability, a solver may find a solution of \texttt{E}-depth $1$.
\end{remark}

Notice that the above reasoning is independent from how one characterizes $\shortparallel$. Hence the goal now is to model $\shortparallel$ to catch as many solutions as possible, while keeping them feasible to the hardware. With this in mind, we propose the following criterion: given any $i,j$, $i \shortparallel j$ holds whenever $i$ and $j$ can run within a certain ``small enough" time lapse. Specifically, the time lapse depends on the coherence time of communication qubits (encoded by $C$), which are assumed to be much more affected by noise than computing qubits (encoded by $Q$). 
Notice that, when two operations $i,j$ run in quasi-parallelism, the life-time of the employed communication qubits might grow. Therefore, we need to ensure that it does not exceed the coherence time of the entanglement. Formally, let us assume $\Delta_c$ being the coherence time of the entanglement -- hence, it starts from the moment \texttt{E} ends, up to the beginning of the measurements \texttt{M} --.

A complication arises from the fact that $\shortparallel$ is, in general, an \textit{intransitive} relation. 
To understand why this is true, consider the circuit in Figure \ref{fig:cascade_conflict}. In such a scenario we are faced with multiple choices. Namely, running
\begin{enumerate}
    \item all $i,j,k$ at different time steps;
    \item all $i,j,k$ at the same time step;
    \item $i, j$ together and $k$ afterwards;
    \item $i$ only, followed by $j,k$ together.
\end{enumerate}

Case 1) is not of interest, because it is the worst solution and no optimization applies.
Case 2) is the best solution, but it is not necessarily feasible. In fact, for $\Delta_c$ small enough, we are forced to split the operations, as in one of the cases 3) and 4).
This explains the non-transitivity, since $i\shortparallel j$ and $j\shortparallel k$, but $i\nshortparallel k$.

We still need to characterize $\shortparallel$, hence, we introduce a predicate method which aims to bring \texttt{RCX} closer to each other, so that quasi-parallelism is achievable.

\begin{figure}[ht]
    \centering
    \begin{quantikz}[thin lines,row sep={0.6cm,between origins}]
        &\ctrl{1}\gategroup[wires=2,steps=1,style={dashed,rounded corners,inner xsep=1pt,inner ysep=-0.3pt, fill=red!10}, background, label style={rounded corners,label position=above,  yshift=0.15cm, fill=red!10}, background]{$i$} & \qw & \qw & \qw\\
        &\targ{} & \ctrl{1}\gategroup[wires=2,steps=1,style={dashed,rounded corners,inner xsep=1pt,inner ysep=-0.3pt, fill=blue!10}, background, label style={rounded corners,label position=above,  yshift=0.66cm, fill=blue!10}, background]{$j$} & \targ{}\gategroup[wires=2,steps=1,style={dashed,rounded corners,inner xsep=1pt,inner ysep=-0.3pt, fill=green!10}, background, label style={rounded corners,label position=above,  yshift=0.65cm, fill=green!10}, background]{$k$} & \qw\\
        & \qw & \targ{} & \ctrl{-1} & \qw
    \end{quantikz}
    \caption{Three \texttt{RCX} operators in logical conflict.}
    \label{fig:cascade_conflict}
\end{figure}
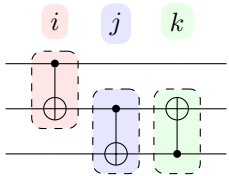
\begin{figure}[ht]
    \centering
        \begin{quantikz}[thin lines,row sep={0.6cm,between origins}]
        &\ctrl{1}\gategroup[wires=2,steps=1,style={dashed,rounded corners,inner xsep=1pt,inner ysep=-0.3pt, fill=red!10}, background, label style={rounded corners,label position=above,  yshift=0.15cm, fill=red!10}, background]{$i$} & \qw & \qw\\
        &\targ{} & \qw & \qw\\
        & \targ{}\gategroup[wires=2,steps=1,style={dashed,rounded corners,inner xsep=1pt,inner ysep=-0.3pt, fill=green!10}, background, label style={rounded corners,label position=above,  yshift=-1.76cm, fill=green!10}, background]{$k$} & \ctrl{1}\gategroup[wires=2,steps=1,style={dashed,rounded corners,inner xsep=1pt,inner ysep=-0.3pt, fill=blue!10}, background, label style={rounded corners,label position=above,  yshift=-1.74cm, fill=blue!10}, background]{$j$} & \qw\\
        & \ctrl{-1} & \targ{} & \qw
    \end{quantikz}
    \caption{Two independent \texttt{RCX} -- i.e. $i$ and $j$ -- belonging different layers.}
    \label{fig:no_conflict}
\end{figure}
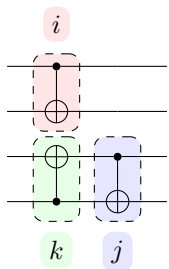

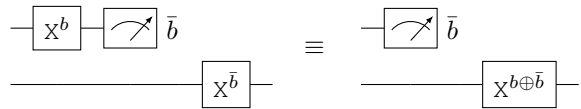
\begin{figure}[t]
    \centering
    \begin{quantikz}[thin lines,row sep={0.75cm,between origins}, column sep ={0.3cm}]
        &\gate[]{\texttt{X}^b}& \meter[]{}\rstick[]{$\bar{b}$}&\\
        &\qw & \qw & \qw & \gate[]{\texttt{X}^{\bar{b}}} & \qw &
    \end{quantikz}
    $\equiv\ $
    \begin{quantikz}[thin lines,row sep={0.75cm,between origins}, column sep ={0.3cm}]
    &\meter[]{}\rstick[]{$\bar{b}$}&\\
    &\qw & \qw &\gate[]{\texttt{X}^{b \oplus \bar{b}}}&\qw &
    \end{quantikz}
    \caption{Propagation of \texttt{X}$^b$. First wire no longer need information of $b$. Second wire need information given by $b\oplus\bar{b}$. Notice that measured $\bar{b}$ is not the same value in the two cases.
    }
    \label{fig:eq}
	\hrulefill
\end{figure}
\begin{figure*}
    \centering
    \begin{quantikz}[thin lines,row sep={0.7cm,between origins},column sep=0.55cm]
        &\ctrl{1} & \qw & \qw & \qw & \qw & \qw\\
        &\targ{} &  \gate[]{\texttt{H}} & \ctrl{1} & \qw & \qw & \qw\\
        & \qw & \qw & \targ{} & \gate[]{\texttt{H}} & \targ{} & \qw\\
         & \qw & \qw & \qw &\qw & \ctrl{-1} & \qw
    \end{quantikz}
    $\ \ \overset{\mathcal{A}}{\longmapsto}$
    \begin{quantikz}[thin lines,row sep={0.7cm,between origins},column sep=0.55cm]
        &\qw & \ctrl{1} & \qw &\qw & \qw & \gate{\texttt{Z}^{b_2}} & \qw\\
	    &\gate[wires=2]{\texttt{E}}& \targ{} & \qw &\qw & \meter{}\rstick[]{$b_1$} &\\
		&& \ctrl{1} & \gate{\texttt{H}}&\qw & \meter{}\rstick[]{$b_2$} & \\
		&\qw &  \targ{} &\gate[]{\texttt{H}} & \ctrl{1}  & \qw & \gate{\texttt{Z}^{b_1\oplus b_4}} & \qw\\
		&\gate[wires=2]{\texttt{E}} & \qw &\qw & \targ{} &\meter{}\rstick[]{$b_3$} &\\
		&& \ctrl{1} & \gate{\texttt{H}} &\qw & \meter{}\rstick[]{$b_4$} &\\
		&\qw & \targ{} & \ctrl{1}  &\gate{\texttt{H}} & \qw& \gate{\texttt{X}^{b_6}\texttt{Z}^{b_3}} & \qw\\
		&\gate[wires=2]{\texttt{E}} & \gate{\texttt{H}} & \targ{} & \qw &\meter{}\rstick[]{$b_5$} &\\
		&& \targ{} & \qw& \qw &\meter{}\rstick[]{$b_6$} &\\
		&\qw & \ctrl{-1} & \qw & \qw &\qw & \gate{\texttt{Z}^{b_5}} & \qw
    \end{quantikz}
    \caption{An expansion, obtained by applying rules from $\mathcal{A}$. In this example scenario, \texttt{RCX} are interspersed with single-qubit  local operators. Notice that boolean variables travel simultaneously. Hence, the assumption we made in Section \ref{sec:time} -- i.e. $\Delta_{\texttt{U}^b} \lesssim \Delta_{\texttt{E}}$ -- holds also for complex evaluations as $\texttt{Z}^{b_1\oplus b_4}$ and $\texttt{X}^{b_6}\texttt{Z}^{b_3}$.
    }
    \label{fig:predicate}
\end{figure*}
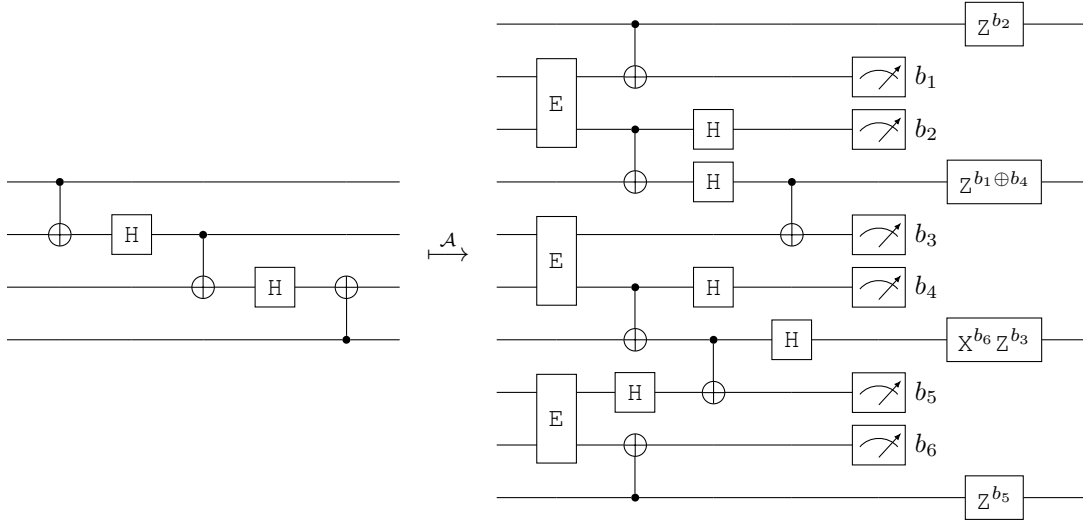
\subsection{Achieving quasi-parallelism, a recursive predicate}
\label{sec:predicate}
As said above, we are now going to introduce a method which verifies if any two telegates can run in quasi-parallelism. Therefore, this method, say $\mathcal{A}(i,j,\Delta_c)$, is a predicate, which is true whenever the operations in input can run in quasi-parallelism. We can finally characterize $\shortparallel$:
\[i \shortparallel j \iff \mathcal{A}(i,j,\Delta_c).\]

$\mathcal{A}$ works in a recursive fashion with three different scenarios as base case. 

\textbf{Base case (i):} given two operations $i,j$, if they belong to the same layer, clearly they can run in full parallelism, therefore $\mathcal{A}(i,j, \Delta_c)$ is true.

\textbf{Base case (ii):} similarly to (i), if $i,j$ belong to different layers and they are completely independent\footnote{Namely, what $i$ does to its qubits does not affect the qubits $j$ operates on.}, $\mathcal{A}(i,j, \Delta_c)$ is true. Circuit of Figure \ref{fig:no_conflict} gives an example with $i,j$ in contiguous layers.

\textbf{Base case (iii):} assume $i,j$ contiguous  -- i.e. in contiguous layers -- and both operating on, at least, one common qubit. We want to introduce, with this base case, the possibility that multiple operators may run simultaneously, as exemplified in Figures~\ref{fig:extended}. For this reason, algorithm $\mathcal{A}$ considers all the operators involved to perform an \texttt{RCX} -- recall protocol from Figure \ref{fig:remop} --.
Namely, $\mathcal{A}$ pushes forward the post-processing of $i$ -- i.e. the Pauli operations \texttt{Z}$^{b}$ and \texttt{X}$^{\bar{b}}$ -- after the pre-processing of $j$ -- i.e. the \texttt{CX} operations --.
One can do that by using the following well known transformation rules:
\begin{itemize}
    \item $\texttt{CX}(\texttt{X}^b \otimes \texttt{I}) \equiv (\texttt{X}^b \otimes \texttt{X}^b)\texttt{CX}$
    \item $\texttt{CX}(\texttt{I} \otimes \texttt{Z}^b) \equiv (\texttt{Z}^b \otimes \texttt{Z}^b)\texttt{CX}$
    \item $\texttt{CX}(\texttt{I} \otimes \texttt{X}^b) \equiv (\texttt{I} \otimes \texttt{X}^b)\texttt{CX}$
    \item $\texttt{CX}(\texttt{Z}^b \otimes \texttt{I}) \equiv (\texttt{Z}^b \otimes \texttt{I})\texttt{CX}$
\end{itemize}

After the application of these rules, some post-processing operation, might have been \textit{propagated} also to communication qubits. Specifically, it may happen that a measurement is preceded by an operation $\texttt{X}^b$. One can always reduce the depth of the circuit by sending $b$ to the target(s) of the measurement. This is indeed what happens in our first example -- Figure \ref{fig:extended} --, where, instead of performing $\texttt{X}^{b_1}$ in the communication qubit, we opt to put it in combination with $\texttt{X}^{b_3}$, achieving a single operation $\texttt{X}^{b_1 \oplus b_3}$ -- see also Figure \ref{fig:eq} for a circuit representation --. At the end of the circuit manipulation, the life-time of the communication qubits may have risen. If it does not exceed $\Delta_c$, then $\mathcal{A}(i,j, \Delta_c)$ is true; otherwise, $\mathcal{A}(i,j, \Delta_c)$ is false.

\textbf{Recursion:} consider now the case where $i$ and $j$ are separated by a sequence of local operations $\texttt{O}_1, \dots, \texttt{O}_n$\footnote{Namely, operations $\texttt{O}_1, \dots, \texttt{O}_n$ belong to layers between the ones of $i$ and $j$}, assumed to be confined to the universal set $\{\texttt{H}, \texttt{T}, \texttt{CX}\}$. In this case, $\mathcal{A}$ applies, recursively, transformations for both $i$ and $j$.
Specifically, as long as possible, it \textit{pushes forward} the post-processing of $i$ by using former rules together with:
\begin{itemize}
    \item $\texttt{T}\texttt{Z}^b \equiv \texttt{Z}^b\texttt{T}$
    \item $\texttt{H}\texttt{X}^b \equiv \texttt{Z}^b\texttt{H}$
\end{itemize}
Ultimately, as long as possible, $\mathcal{A}$ \textit{pushes backward} the pre-processing of $j$ by using the following standard rules:
\begin{itemize}
    \item $\texttt{CX}(\texttt{T} \otimes \texttt{I}) \equiv (\texttt{T} \otimes \texttt{I})\texttt{CX}$
    \item $(\texttt{CX}_{u,v}\otimes\texttt{I})(\texttt{I}\otimes\texttt{CX}_{w,v})\equiv(\texttt{I}\otimes\texttt{CX}_{w,v})(\texttt{CX}_{u,v}\otimes\texttt{I})$
    \item $(\texttt{CX}_{u,v}\otimes\texttt{I})(\texttt{I}\otimes\texttt{CX}_{u,w})\equiv(\texttt{I}\otimes\texttt{CX}_{u,w})(\texttt{CX}_{u,v}\otimes\texttt{I})$
    \item $\texttt{CX}_{u,v}(\texttt{H}\otimes\texttt{H})\equiv (\texttt{H}\otimes\texttt{H})\texttt{CX}_{v,u}$
    \item $(\texttt{I}\otimes\texttt{H})\texttt{CX}_{u,v}(\texttt{H}\otimes\texttt{I})\equiv (\texttt{H}\otimes\texttt{I})\texttt{CX}_{v,u}(\texttt{I}\otimes\texttt{H})$
\end{itemize}
If $\mathcal{A}$ manages to make $i$ post-processing and $j$ pre-processing contiguous, the validity check reduces to the base case scenario. Otherwise, $\mathcal{A}(i,j, \Delta_c)$ is false.

So far, we defined $\mathcal{A}$ only for $i,j$ without any other remote operation in between. Before generalizing the method to any $i$ and $j$ we prove that our definition of $\mathcal{A}$ can be implemented in polynomial time. We need this requirement to keep things tractable.
\begin{theorem}
    $\mathcal{A}$ is polynomial.
    \label{th:a}
\end{theorem}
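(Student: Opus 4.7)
The plan is to bound both the number of transformation rule applications performed by $\mathcal{A}$ and the cost per application, in terms of the input circuit parameters. Let $n$ denote the number of qubits and $d$ the depth of the circuit region between operations $i$ and $j$.

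First, I would dispose of the three base cases (i)--(iii). Each involves only the fixed-size pre- and post-processing of a single \texttt{RCX} together with a layer-equality or independence check on a constant-size local neighborhood. Case (iii) additionally requires summing the involved layer times and comparing against $\Delta_c$, which is a single arithmetic comparison. All three base cases therefore run in time polynomial (in fact, linear) in the size of the involved sub-circuit.

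For the recursive step, the key observation is that $\mathcal{A}$ performs two independent passes: a forward pass that propagates the post-processing Paulis of $i$ through the intervening layers using the rules listed in Section~\ref{sec:predicate}, and a symmetric backward pass that propagates the pre-processing of $j$. Each rule application is a local rewrite involving at most a constant number of gates, hence costs $O(1)$. What requires justification is that the number of such applications stays polynomial. The only rules capable of increasing the number of traveling Paulis are of the form $\texttt{CX}(\texttt{X}^b \otimes \texttt{I}) \equiv (\texttt{X}^b \otimes \texttt{X}^b)\texttt{CX}$ and its $\texttt{Z}$-dual; however, whenever two Paulis of the same type meet on the same qubit they can be merged into a single one with XOR-combined exponent, exactly as illustrated in Figure~\ref{fig:eq}. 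Under an eager-merging strategy, at most one $\texttt{X}^b$ and one $\texttt{Z}^b$ are carried per qubit at any time, so the traveling set has size $O(n)$.

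Combining these observations, the forward pass performs at most $O(n)$ commutations per layer and traverses at most $d$ layers, yielding $O(nd)$ rule applications; the backward pass is symmetric, and the terminal base-case check is again $O(n)$. Hence $\mathcal{A}$ runs in time $O(nd)$, which is polynomial in the size of the input circuit. The main obstacle in making this fully rigorous is formalizing the merging invariant: one must verify that the rewrite system from Section~\ref{sec:predicate}, together with eager merging on collision, indeed maintains the bounded-per-qubit property at every step, so that the duplication allowed by the \texttt{CX}-commutation rule does not compound across layers. Once this invariant is established, the polynomial bound follows directly from the step-count argument.
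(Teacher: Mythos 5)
Your proposal is correct and follows essentially the same route as the paper's proof: both arguments rest on the invariant that pushing the post-processing of $i$ forward (and, symmetrically, the pre-processing of $j$ backward) through each intervening local operation preserves its Pauli (resp.\ \texttt{CX}) form at constant cost per step, giving a bound linear in the number of intervening operations. You are in fact somewhat more explicit than the paper, which dismisses the potential blow-up from the Pauli-duplicating \texttt{CX} commutation rules with ``by construction of the rule set,'' whereas you spell out the per-qubit eager-merging invariant (one $\texttt{X}^b$ and one $\texttt{Z}^b$ per wire) that makes that claim rigorous.
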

\begin{proof}
 Assume there occur $n$ local operations, say $\texttt{O}_1, \dots, \texttt{O}_n$, between $i$ and $j$.
 If $\mathcal{A}$ manages to push $i$ forward $\texttt{O}_1$, it means that its post-processing run after $\texttt{O}_1$ and it may only propagate \textit{vertically}, over different qubits -- by construction of the rule set --. As consequence, the depth of the circuit has not increased. Furthermore, the post-processing is still composed by Pauli operations of the kind $\texttt{Z}^{b}$ and $\texttt{X}^{\bar{b}}$. Hence, this holds for any $\texttt{O}_{1\leq\bar{n}\leq n}$ and the recursion is upper-bounded by $\mathcal{O}(n)$.
 
 Symmetrically, if $\mathcal{A}$ manages to push $j$ backward $\texttt{O}_n$, it means that the pre-processing can run before $\texttt{O}_n$. Also in this case, the depth has not increased and the pre-processing is still composed by two independent \texttt{CX} operations -- again, by construction of the rule set --. Hence, this holds for any $\texttt{O}_{1\leq\bar{n}\leq n}$ and the recursion is upper-bounded by $\mathcal{O}(n)$.
\end{proof}

We can now move on to the general case. Formally, between $i$ and $j$ a remote operation $k$ may occur, which is also in logical conflict with both. For such a scenario, we just add a recursive rule. Namely, $\mathcal{A}(i,j,\Delta_c)$ holds iff the following holds:
\[\exists \varepsilon \in [0,1]\ :\ \mathcal{A}(i,k, \varepsilon\cdot \Delta_c) \land \mathcal{A}(k,j,(1-\varepsilon)\cdot\Delta_c).\]
Take a moment to appreciate why this kind of recursion is feasible. Specifically, one might think that validity of $\mathcal{A}(i,k, \varepsilon\cdot\Delta_c)$ and $\mathcal{A}(k,j, (1-\varepsilon)\cdot\Delta_c)$ are not independent, because they both operate on $k$. However, in the former function, $\mathcal{A}$ evaluates the pre-processing of $k$, while, in the latter, it evaluates its post-processing. Therefore they can be evaluated independently.

\begin{theorem}
    Generalized $\mathcal{A}$ is polynomial.
\end{theorem}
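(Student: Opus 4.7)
The plan is to reduce the existential search over $\varepsilon \in [0,1]$ to a deterministic split via a monotonicity argument, and then to solve the resulting recurrence by dynamic programming. First I would observe that for every fixed pair $(i,j)$ the predicate $\mathcal{A}(i,j,\Delta)$ is monotone nondecreasing in $\Delta$: the circuit-rearrangement rules used in the base cases and in Theorem \ref{th:a} do not depend on $\Delta$; its value is consulted only at the very end to check that the resulting communication-qubit life-time fits within the budget. Hence there is a well-defined minimum budget $\Delta^{\star}(i,j)$ such that $\mathcal{A}(i,j,\Delta)$ holds iff $\Delta \geq \Delta^{\star}(i,j)$, with $\Delta^{\star}(i,j) = +\infty$ when no rearrangement succeeds.

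Under monotonicity, the existence of $\varepsilon \in [0,1]$ with $\mathcal{A}(i,k,\varepsilon \Delta_c) \land \mathcal{A}(k,j,(1-\varepsilon)\Delta_c)$ is equivalent to $\Delta^{\star}(i,k)+\Delta^{\star}(k,j) \leq \Delta_c$, and ranging over admissible intermediate operations $k$ yields the Bellman-style recurrence
\[
    \Delta^{\star}(i,j) \;=\; \min_{k}\bigl[\Delta^{\star}(i,k) + \Delta^{\star}(k,j)\bigr],
\]
whose base case (no remote $k$ in logical conflict strictly between $i$ and $j$) is evaluated by the non-generalized $\mathcal{A}$ in polynomial time by Theorem \ref{th:a}. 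This is the same combinatorial structure as matrix-chain multiplication or all-pairs shortest paths, and is amenable to bottom-up memoization over pairs of remote operations sorted along the logical timeline.

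For the complexity bound, let $m$ denote the number of remote operations in the circuit and $n$ an upper bound on the local-operation count between any two remote operations. Memoizing $\Delta^{\star}(i,j)$ for the $O(m^2)$ relevant pairs in order of increasing separation, each cell costs either one base-case call (polynomial in $n$ by Theorem \ref{th:a}) or a minimization over $O(m)$ candidate splits, for a total of $O(m^3)\cdot\mathrm{poly}(n)$ operations, hence polynomial in the size of the input circuit. A solver for the original decision problem $\mathcal{A}(i,j,\Delta_c)$ is then obtained by a single table lookup followed by the test $\Delta^{\star}(i,j) \leq \Delta_c$.

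The main obstacle, which I would address explicitly, is justifying that $\mathcal{A}(i,k,\cdot)$ and $\mathcal{A}(k,j,\cdot)$ can be evaluated independently even though they both involve the intermediate remote operation $k$. The paper already remarks that the first manipulates only the pre-processing of $k$ (pushing $i$'s Pauli post-processing forward past the \texttt{E} and \texttt{CX} of $k$), while the second manipulates only its post-processing (pushing $k$'s Pauli corrections forward past $j$'s pre-processing). I would formalize this by arguing that the rewrite rules employed on each side act on disjoint phases of $k$'s protocol, that their net effect on $k$'s consumed coherence window is additive, and that neither interferes with the other's eligibility check. This additive decomposition of the budget is precisely what legitimizes the recurrence above and, together with monotonicity, ensures that the polynomial DP truly captures the semantics of the generalized $\mathcal{A}$.
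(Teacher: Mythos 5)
Your proof is correct, but it takes a genuinely different route from the paper's. The paper's argument is a divide-and-conquer count: it splits at the middle remote operation $k_{m/2}$, obtains a balanced recursion tree of height $\log m$ with $\mathcal{O}(m)$ leaves, and invokes Theorem \ref{th:a} at each leaf. Crucially, the paper never explains how the existential quantifier over the continuous parameter $\varepsilon$ is discharged at each internal node -- it simply counts calls as if the split were free. Your approach fills exactly that hole: by observing that $\mathcal{A}(i,j,\Delta)$ is monotone in $\Delta$ (the rewrite rules never consult $\Delta$; only the final life-time check does), you replace the predicate with a threshold $\Delta^{\star}(i,j)$, turn $\exists\varepsilon$ into the closed-form test $\Delta^{\star}(i,k)+\Delta^{\star}(k,j)\leq\Delta_c$, and solve the resulting Bellman recurrence by memoization over $\mathcal{O}(m^2)$ pairs in $\mathcal{O}(m^3)\cdot\mathrm{poly}(n)$ time. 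What each approach buys: the paper's count is asymptotically tighter ($\mathcal{O}(m)$ calls) but rests on an unjustified step, and it fixes one particular (middle) split rather than searching over intermediates; your DP is a constant-degree polynomial slower but is fully constructive, handles the choice of intermediate operation $k$ via the $\min_k$, and actually computes the optimal budget rather than merely deciding feasibility -- which would also be the more useful object for a solver. The one detail worth making explicit is that the base case of your recurrence needs the \emph{value} $\Delta^{\star}(i,j)$ rather than the boolean output of Theorem \ref{th:a}; since that procedure computes the rearranged circuit's communication-qubit life-time anyway and only compares it to $\Delta_c$ at the end, returning the life-time itself is an immediate modification, but it should be stated.
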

\begin{proof}
Assume there occur $k_1,\dots,k_m$ between $i$ and $j$. For the purpose of the proof let $m$ being a power of $2$. $\mathcal{A}(i,j,\Delta_c)$ can choose any of the $k_1,\dots,k_m$ operations for the recursion. To keep symmetry, let $\mathcal{A}(i,k_{\frac{m}{2}}, \varepsilon\cdot\Delta_c)$ and $\mathcal{A}(k_{\frac{m}{2}},j, (1-\varepsilon)\cdot\Delta_c)$ be the recursive call. Notice that operations considered by $\mathcal{A}(i,k_{\frac{m}{2}}, \varepsilon\cdot\Delta_c)$ are $\frac{m}{2}$, as well as the ones considered by $\mathcal{A}(k_{\frac{m}{2}},j, (1-\varepsilon)\cdot\Delta_c)$. The result is a recursive binary tree of height $\log{m}$ and, therefore, $\mathcal{O}(m)$ calls to $\mathcal{A}$. The leaves correspond to the base case of the recursion, which is proved to be tractable in Theorem \ref{th:a}.
    
\end{proof}




Figure \ref{fig:predicate} shows an example scenario where we used rules as in $\mathcal{A}$ -- in addition to the first one of Figure \ref{fig:extended} --.
Clearly, our modular architecture is prone to modifications or extensions of $\mathcal{A}$, if future research highlighted more refined requirements. 

\begin{remark}
Notice that we managed to define $\mathcal{A}$ to be independent by the connectivity of $\mathcal{Q}$. This was possible thanks to the way we modeled telegates via efficient entanglement paths \ref{sec:rcx_gen}. In other words, $\mathcal{A}(i,j,\Delta_c)$ works for any solver and regardless of the path this chooses to perform $i$ and $j$.
As consequence, the characterization of $\mathcal{A}$ -- and therefore also of $\shortparallel$ -- is static and depends only by the logical circuit and global factors, i.e. $\Delta_c$. 
Furthermore, we can relate coherence time and entanglement link creation to $\Delta_{\texttt{E}} + \Delta_c \approx \Delta_{\texttt{E}}$. As consequence, whatever $\Delta_c$ is, $\mathcal{A}$ does not significantly affect the duration of each time step. This makes the \texttt{E}-depth a particularly good index for the running time of the overall computation.
\end{remark}

\section{Discussion}
\subsection{Summary evaluation}
In what follows we evaluate our model for \texttt{DQCC}.



(i) By expressing the problem as a quickest flow problem, we could give a formulation corresponding to a multi-commodity flow problem over fixed time. This approach is particularly well fitting with our goals, because a quickest flow expresses the need to run a circuit as fast as possible, while a flow over fixed time brings a side interest into the minimization of resource usage, which is clearly a \textit{desideratum}, but still secondary to the overall running-time.

(ii) Constraints \eqref{c5} and \eqref{c6} give the possibility to consider more interesting solutions. In fact through efficient circuit manipulation -- see predicate $\mathcal{A}$ --, we managed to gather logically sequenced telegates within the same time step, achieving quasi-parallelism.



(iii) We built our model step by step, each of which rigorously explained. The result is an highly modular work. For example, if one can consider only circuits where operations can all commute each other, formulation \eqref{mcf} is enough and approximation bounds are available. Instead, when considering any circuit, one can easily shape the extra constraints of formulation \eqref{dqcc}. Consider, for example, the quasi-parallelism relation $\shortparallel$, we characterized it as the predicate $\mathcal{A}$. By just extending the way $\mathcal{A}$ works, the space of good solutions gets larger.

(iv) Since we modeled the problem as a network flow problem, one can also exploit the huge related literature to get inspiration in the way of tackling the problem. In next sub-section, we do an extensive discussion in this sense.

\subsection{Tackling the problem}
Formulation \eqref{dqcc} is a particular case of \texttt{MCF}$_d$, as it slightly recedes from the standard formulation.
As expected, the problem is still intractable. To understand that, consider this simple scenario: an instance $[k]$ with $k=2$ such that $1\shortparallel2$. We can restate the problem as follows: assert if there exists a solution at first time step. If not, just put operation $2$ at second time step. Unfortunately, asserting if such a solution exists is \texttt{NP}-hard. Indeed, in \cite{EveItaSha-75}, authors proved the hardness of such a decision problem, even for single capacity edges. Therefore, it is reasonable to look for approximations of \texttt{DQCC}.

To this aim, we think a good line of research would be to follow a common technique for tackling $\texttt{MCF}_d$: the \textit{time-expansion} \cite{ForFul-58}. Namely, a re-definition of the instance graph, from $\mathcal{Q}$ to a new graph $\mathcal{Q}_d$.
Such a technique is useful because, instead of tackling $\texttt{MCF}_d$ over $\mathcal{Q}$, one can tackle its static version $\texttt{MCF}$ over $\mathcal{Q}_d$. Let us introduce it formally for our scenario.

A time-expansion of $\mathcal{Q}$ is a graph $\mathcal{Q}_d = (P_d, R_d/\star)$. Accordingly to this criterion, an edge $(P_i,P_j)\in R/\star$ taking discrete travel time $\theta$ would translate into directed edges $(P_i(\tau),P_j(\tau + \theta)), (P_j(\tau),P_i(\tau + \theta)) \in R_d/\star$, with a shared constraint on the capacity. Nevertheless, edges in $\mathcal{Q}$ are assumed to have null travel time. Hence, a time-expansion of $\mathcal{Q}$ is particularly efficient, since one just needs to introduce a repetition of $\mathcal{Q}$ for each time-step $\tau$, which we refer to as $\mathcal{Q}(\tau) = (P(\tau), R(\tau)/\star)$. As consequence, time-dependent sets $P^C(\tau)$ and $P^T(\tau)$ replace $P^C$ and $P^T$.
We keep using $P^C$ and $P^T$ as the nodes encoding the commodities, non-localized in time. For each $i$ and $\tau$, we introduce edges $(P_i^C, P_i^C(\tau))$ and $(P_i^T(\tau), P_i^T)$, both with unit capacity.

Since only integral flow are allowed and the demand is exactly 1, for any operation $i$, only one of the edges $\{(P_i^C, P_i^C(\tau))\}_{\tau}$ -- as well as only one in $\{(P_i^T(\tau), P_i^T)\}_{\tau}$ -- will have a non-zero flow.

Now that we gave a first intuitive way to encode the sources of the problem, let us optimize it. Notice that operation $1$ can always run at time $1$, and it is a waste of time and space considering other options. As consequence, for operation $1$, we only introduce $(P_1^C, P_1^C(1))$ and $(P_1^T(1), P_1^T)$. This extends to any operation, which can always run in a time between $1$ and $\text{min}\{i,d\}$, by assuming that a solution exists with time horizon $d$. Therefore, for each operation $i$, we introduce the sets of edges $\{(P_i^C, P_i^C(\tau))\ :\ \forall 1\leq\tau\leq\text{min}\{i,d\}\}$ and $\{(P_i^T(\tau), P_i^T)\ :\ \forall 1\leq\tau\leq\text{min}\{i,d\}\}$.

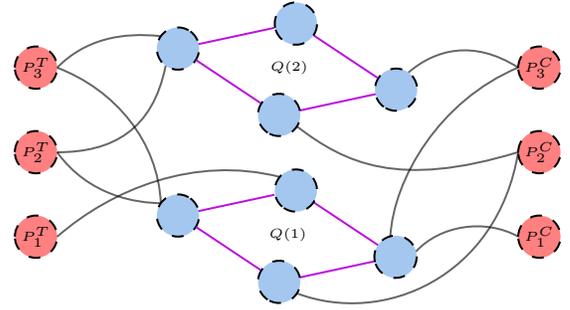
\begin{figure}
    \centering

\tikzset{every picture/.style={line width=0.75pt}} 

\begin{tikzpicture}[x=0.8pt,y=0.8pt,yscale=-1,xscale=1]

\draw [color={rgb, 255:red, 189; green, 16; blue, 224 }  ,draw opacity=1 ][line width=0.75]    (356.41,482.21) -- (320.47,489.97) ;
\draw  [color={rgb, 255:red, 0; green, 0; blue, 0 }  ,draw opacity=1 ][fill={rgb, 255:red, 74; green, 144; blue, 226 }  ,fill opacity=0.5 ][dash pattern={on 4.5pt off 4.5pt}] (305.37,500.46) .. controls (300.73,497.47) and (299.39,491.28) .. (302.38,486.64) .. controls (305.38,482) and (311.57,480.66) .. (316.21,483.66) .. controls (320.85,486.65) and (322.19,492.84) .. (319.19,497.48) .. controls (316.2,502.12) and (310.01,503.46) .. (305.37,500.46) -- cycle ;
\draw [color={rgb, 255:red, 189; green, 16; blue, 224 }  ,draw opacity=1 ][line width=0.75]    (357.57,474.68) -- (326.95,454.33) ;
\draw  [color={rgb, 255:red, 0; green, 0; blue, 0 }  ,draw opacity=1 ][fill={rgb, 255:red, 74; green, 144; blue, 226 }  ,fill opacity=0.5 ][dash pattern={on 4.5pt off 4.5pt}] (312.24,456.48) .. controls (308.03,452.92) and (307.5,446.61) .. (311.07,442.39) .. controls (314.63,438.17) and (320.94,437.64) .. (325.16,441.21) .. controls (329.38,444.78) and (329.9,451.09) .. (326.34,455.3) .. controls (322.77,459.52) and (316.46,460.05) .. (312.24,456.48) -- cycle ;
\draw  [color={rgb, 255:red, 0; green, 0; blue, 0 }  ,draw opacity=1 ][fill={rgb, 255:red, 74; green, 144; blue, 226 }  ,fill opacity=0.5 ][dash pattern={on 4.5pt off 4.5pt}] (356.12,482.64) .. controls (354.77,477.28) and (358.01,471.84) .. (363.36,470.49) .. controls (368.71,469.13) and (374.15,472.37) .. (375.51,477.72) .. controls (376.87,483.07) and (373.63,488.51) .. (368.27,489.87) .. controls (362.92,491.23) and (357.48,487.99) .. (356.12,482.64) -- cycle ;
\draw [color={rgb, 255:red, 189; green, 16; blue, 224 }  ,draw opacity=1 ][line width=0.75]    (272.64,458.57) -- (308.58,450.84) ;
\draw  [color={rgb, 255:red, 0; green, 0; blue, 0 }  ,draw opacity=1 ][fill={rgb, 255:red, 74; green, 144; blue, 226 }  ,fill opacity=0.5 ][dash pattern={on 4.5pt off 4.5pt}] (268.79,452.54) .. controls (273.28,455.76) and (274.3,462.01) .. (271.08,466.49) .. controls (267.86,470.98) and (261.61,472) .. (257.12,468.78) .. controls (252.64,465.56) and (251.61,459.31) .. (254.84,454.82) .. controls (258.06,450.34) and (264.31,449.31) .. (268.79,452.54) -- cycle ;
\draw [color={rgb, 255:red, 189; green, 16; blue, 224 }  ,draw opacity=1 ][line width=0.75]    (271.47,466.1) -- (302.08,486.48) ;
\draw [color={rgb, 255:red, 189; green, 16; blue, 224 }  ,draw opacity=1 ][line width=0.75]    (356.41,403.21) -- (320.47,410.97) ;
\draw  [color={rgb, 255:red, 0; green, 0; blue, 0 }  ,draw opacity=1 ][fill={rgb, 255:red, 74; green, 144; blue, 226 }  ,fill opacity=0.5 ][dash pattern={on 4.5pt off 4.5pt}] (305.37,421.46) .. controls (300.73,418.47) and (299.39,412.28) .. (302.38,407.64) .. controls (305.38,403) and (311.57,401.66) .. (316.21,404.66) .. controls (320.85,407.65) and (322.19,413.84) .. (319.19,418.48) .. controls (316.2,423.12) and (310.01,424.46) .. (305.37,421.46) -- cycle ;
\draw  [color={rgb, 255:red, 0; green, 0; blue, 0 }  ,draw opacity=1 ][fill={rgb, 255:red, 74; green, 144; blue, 226 }  ,fill opacity=0.5 ][dash pattern={on 4.5pt off 4.5pt}] (360.25,409.24) .. controls (355.77,406.02) and (354.74,399.78) .. (357.96,395.29) .. controls (361.18,390.8) and (367.43,389.77) .. (371.91,392.99) .. controls (376.4,396.21) and (377.43,402.46) .. (374.21,406.95) .. controls (370.99,411.44) and (364.74,412.46) .. (360.25,409.24) -- cycle ;
\draw [color={rgb, 255:red, 189; green, 16; blue, 224 }  ,draw opacity=1 ][line width=0.75]    (357.57,395.68) -- (326.95,375.33) ;
\draw  [color={rgb, 255:red, 0; green, 0; blue, 0 }  ,draw opacity=1 ][fill={rgb, 255:red, 74; green, 144; blue, 226 }  ,fill opacity=0.5 ][dash pattern={on 4.5pt off 4.5pt}] (308.9,371.82) .. controls (307.81,366.41) and (311.31,361.13) .. (316.73,360.04) .. controls (322.14,358.95) and (327.41,362.46) .. (328.51,367.87) .. controls (329.6,373.29) and (326.09,378.56) .. (320.68,379.65) .. controls (315.26,380.74) and (309.99,377.24) .. (308.9,371.82) -- cycle ;
\draw [color={rgb, 255:red, 189; green, 16; blue, 224 }  ,draw opacity=1 ][line width=0.75]    (272.64,379.57) -- (308.58,371.84) ;
\draw  [color={rgb, 255:red, 0; green, 0; blue, 0 }  ,draw opacity=1 ][fill={rgb, 255:red, 74; green, 144; blue, 226 }  ,fill opacity=0.5 ][dash pattern={on 4.5pt off 4.5pt}] (268.79,373.54) .. controls (273.28,376.76) and (274.3,383.01) .. (271.08,387.49) .. controls (267.86,391.98) and (261.61,393) .. (257.12,389.78) .. controls (252.64,386.56) and (251.61,380.31) .. (254.84,375.82) .. controls (258.06,371.34) and (264.31,370.31) .. (268.79,373.54) -- cycle ;
\draw [color={rgb, 255:red, 189; green, 16; blue, 224 }  ,draw opacity=1 ][line width=0.75]    (271.47,387.1) -- (302.08,407.48) ;
\draw  [color={rgb, 255:red, 0; green, 0; blue, 0 }  ,draw opacity=1 ][fill={rgb, 255:red, 255; green, 0; blue, 0 }  ,fill opacity=0.5 ][dash pattern={on 4.5pt off 4.5pt}] (423.82,390.65) .. controls (423.83,385.13) and (428.32,380.66) .. (433.84,380.68) .. controls (439.37,380.69) and (443.83,385.18) .. (443.82,390.71) .. controls (443.8,396.23) and (439.31,400.69) .. (433.79,400.68) .. controls (428.27,400.66) and (423.8,396.17) .. (423.82,390.65) -- cycle ;
\draw  [color={rgb, 255:red, 0; green, 0; blue, 0 }  ,draw opacity=1 ][fill={rgb, 255:red, 255; green, 0; blue, 0 }  ,fill opacity=0.5 ][dash pattern={on 4.5pt off 4.5pt}] (423.82,430.65) .. controls (423.83,425.13) and (428.32,420.66) .. (433.84,420.68) .. controls (439.37,420.69) and (443.83,425.18) .. (443.82,430.71) .. controls (443.8,436.23) and (439.31,440.69) .. (433.79,440.68) .. controls (428.27,440.66) and (423.8,436.17) .. (423.82,430.65) -- cycle ;
\draw  [color={rgb, 255:red, 0; green, 0; blue, 0 }  ,draw opacity=1 ][fill={rgb, 255:red, 255; green, 0; blue, 0 }  ,fill opacity=0.5 ][dash pattern={on 4.5pt off 4.5pt}] (423.82,470.65) .. controls (423.83,465.13) and (428.32,460.66) .. (433.84,460.68) .. controls (439.37,460.69) and (443.83,465.18) .. (443.82,470.71) .. controls (443.8,476.23) and (439.31,480.69) .. (433.79,480.68) .. controls (428.27,480.66) and (423.8,476.17) .. (423.82,470.65) -- cycle ;
\draw [color={rgb, 255:red, 0; green, 0; blue, 0 }  ,draw opacity=0.6 ]   (423.82,390.65) .. controls (402.33,377.01) and (384.33,382.34) .. (371.91,392.99) ;
\draw [color={rgb, 255:red, 0; green, 0; blue, 0 }  ,draw opacity=0.6 ]   (423.82,390.65) .. controls (386.5,406.27) and (365,443.27) .. (363.36,470.49) ;
\draw [color={rgb, 255:red, 0; green, 0; blue, 0 }  ,draw opacity=0.6 ]   (423.82,430.65) .. controls (414,490) and (360,513) .. (319.19,497.48) ;
\draw [color={rgb, 255:red, 0; green, 0; blue, 0 }  ,draw opacity=0.6 ]   (423.82,430.65) .. controls (367,448) and (342,436) .. (319.19,418.48) ;
\draw [color={rgb, 255:red, 0; green, 0; blue, 0 }  ,draw opacity=0.6 ]   (423.82,470.65) .. controls (407.67,461.68) and (386.33,465.68) .. (375.51,477.72) ;
\draw  [color={rgb, 255:red, 0; green, 0; blue, 0 }  ,draw opacity=1 ][fill={rgb, 255:red, 255; green, 0; blue, 0 }  ,fill opacity=0.5 ][dash pattern={on 4.5pt off 4.5pt}] (185.82,390.65) .. controls (185.83,385.13) and (190.32,380.66) .. (195.84,380.68) .. controls (201.37,380.69) and (205.83,385.18) .. (205.82,390.71) .. controls (205.8,396.23) and (201.31,400.69) .. (195.79,400.68) .. controls (190.27,400.66) and (185.8,396.17) .. (185.82,390.65) -- cycle ;
\draw  [color={rgb, 255:red, 0; green, 0; blue, 0 }  ,draw opacity=1 ][fill={rgb, 255:red, 255; green, 0; blue, 0 }  ,fill opacity=0.5 ][dash pattern={on 4.5pt off 4.5pt}] (185.82,430.65) .. controls (185.83,425.13) and (190.32,420.66) .. (195.84,420.68) .. controls (201.37,420.69) and (205.83,425.18) .. (205.82,430.71) .. controls (205.8,436.23) and (201.31,440.69) .. (195.79,440.68) .. controls (190.27,440.66) and (185.8,436.17) .. (185.82,430.65) -- cycle ;
\draw  [color={rgb, 255:red, 0; green, 0; blue, 0 }  ,draw opacity=1 ][fill={rgb, 255:red, 255; green, 0; blue, 0 }  ,fill opacity=0.5 ][dash pattern={on 4.5pt off 4.5pt}] (185.82,470.65) .. controls (185.83,465.13) and (190.32,460.66) .. (195.84,460.68) .. controls (201.37,460.69) and (205.83,465.18) .. (205.82,470.71) .. controls (205.8,476.23) and (201.31,480.69) .. (195.79,480.68) .. controls (190.27,480.66) and (185.8,476.17) .. (185.82,470.65) -- cycle ;
\draw [color={rgb, 255:red, 0; green, 0; blue, 0 }  ,draw opacity=0.6 ]   (254.84,375.82) .. controls (236.5,373.75) and (219,377.25) .. (205.82,390.71) ;
\draw [color={rgb, 255:red, 0; green, 0; blue, 0 }  ,draw opacity=0.6 ]   (254.84,454.82) .. controls (253.5,425.75) and (233.5,401.25) .. (205.82,390.71) ;
\draw [color={rgb, 255:red, 0; green, 0; blue, 0 }  ,draw opacity=0.6 ]   (311.07,442.39) .. controls (270.84,432.11) and (231.5,448.25) .. (205.82,470.71) ;
\draw [color={rgb, 255:red, 0; green, 0; blue, 0 }  ,draw opacity=0.6 ]   (257.12,389.78) .. controls (251,420.27) and (232,430.75) .. (205.82,430.71) ;
\draw [color={rgb, 255:red, 0; green, 0; blue, 0 }  ,draw opacity=0.6 ]   (254.84,454.82) .. controls (244,454.75) and (221,451.75) .. (205.82,430.71) ;

\draw (433.82,390.68) node  [font=\tiny]  {$P_{3}^{C}$};
\draw (433.82,430.68) node  [font=\tiny]  {$P_{2}^{C}$};
\draw (433.82,470.68) node  [font=\tiny]  {$P_{1}^{C}$};
\draw (304.67,465.33) node [anchor=north west][inner sep=0.75pt]  [font=\tiny]  {$Q( 1)$};
\draw (305.33,386) node [anchor=north west][inner sep=0.75pt]  [font=\tiny]  {$Q( 2)$};
\draw (195.82,390.68) node  [font=\tiny]  {$P_{3}^{T}$};
\draw (195.82,430.68) node  [font=\tiny]  {$P_{2}^{T}$};
\draw (195.82,470.68) node  [font=\tiny]  {$P_{1}^{T}$};

\end{tikzpicture}
    \caption{Time-expanded graph of $4$ processors, for an instance $[k]$ with $k=3$ and time horizon $d=2$.}
    \label{fig:time_expanded}
	\hrulefill
\end{figure}
Figure \ref{fig:time_expanded} shows the final graph for instance $[k]$ with $k=3$, time horizon $d=2$ on an architecture with $4$ processors.

As said, the time expansion $\mathcal{Q}_d$ is a common way to tackle $\texttt{MCF}_d$ as a static flow problem and it is particularly efficient in our scenario. Indeed, we already pointed out in Section \ref{sec:layer} how formulations \eqref{mcf} and \eqref{dqcc} belong to a group of flow problems which are dynamic-static hybrid. For this reason we could model $\mathcal{Q}_d$ by simply introducing $d$ repetitions of $\mathcal{Q}$.

To the best of our knowledge, even if approximation algorithms for \texttt{MCF} \cite{SriSta-00, ChoCho-06} and variants \cite{CheKhaShe-06, CheKhaShe-04, Sri-97, ChaCheGup-07} have been extensively studied, there seems to be no proposal relatable to ours, modeling \texttt{DQCC}. More formally, no efficient reduction seems possible from our problem to standard formulations, while approximation algorithms proposed in literature usually rely on \texttt{LP}-relaxation, or on greedy criteria that don't fit with constraints \eqref{c5} and \eqref{c6}. Hence, further studies along this line would be useful to (i) place the problem within its most proper complexity class and to (ii) guarantee approximation ratio.


\section{Conclusion}
We addressed the compilation problem on distributed architectures. In line with literature, we assumed telegates as fundamental inter-processor operations. To mitigate their impact to computation, we modeled a minimization problem with the running-time as objective function. 
Even if the main interest was in minimizing the running-time, this highly depends on (i) resource usage and (ii) circuit manipulation. Hence, finding a rigorous model that efficiently consider all these parties can be really tricky. To overcome this, we exploit the wide literature on dynamic flows. Specifically, as done for quickest multi-commodity flows, we embedded an \texttt{ILP} solver within a binary search over time. Hence, even if the primary goal is to find short-time solutions, this passes through a solver addressing (i) and (ii). More in detail, the objective function of the \texttt{ILP} formulation is the resource usage (i), but the constraints are based on circuit manipulation (ii). In other words, we embedded an evaluation of equivalent circuits through (automated) circuit manipulation. Specifically, the evaluation  considers to gather telegates within the same time step. As expected, integrating circuit manipulation to the formulation improved the quality of the solution space, in terms of running-time. In fact, our proposal -- see predicate $\mathcal{A}$ introduced in Section \ref{sec:predicate} --, introduces many better solutions (in terms of running-time) than without it. To quantify that, we showed a group of circuits that, without $\mathcal{A}$, are forced to run in the worst running-time possible -- i.e. as many time steps as the number of remote operations --, while $\mathcal{A}$ achieves best possible solutions -- i.e. \texttt{E}-depth $1$ --.

\appendix
\subsection{Entanglement swap generalization}
\label{apx:path}
\begin{figure}[b]
    \centering
    \begin{quantikz}[thin lines,row sep={0.75cm,between origins}, column sep ={0.3cm}]
		&\gate[2]{\texttt{E}} & \qw & \qw & \qw & \qw & \gate[style={fill=red!10}]{\texttt{Z}^{b_1}} & \qw  & \qw & \qw  & \qw &\gate[style={fill=blue!10}]{\texttt{Z}^{b_3}} & \qw\\
		&& \ctrl{1} & \gate{\texttt{H}} & \meter[]{}\rstick[]{$b_1$} & \\
		&\gate[2]{\texttt{E}}& \targ{} & \qw & \meter{}\rstick[]{$b_2$}\\
		&& \qw & \qw &\qw & \qw &\gate[style={fill=red!10}]{\texttt{X}^{b_2}} & \ctrl{1} & \gate[]{\texttt{H}} & \meter{}\rstick[]{$b_3$}\\
		  && & &&&\gate[2]{\texttt{E}}& \targ{}  &\qw & \meter{}\rstick[]{$b_4$} &\\
		 & &  & & &&  \qw& \qw  & \qw & \qw & \qw & \gate[style={fill=blue!10}]{\texttt{X}^{b_4}} & \qw
	\end{quantikz}
	\caption{Naive implementation of a path with length $2$ as two entanglement swaps in sequence.
	}
	\label{fig:naive_swap}
\end{figure}
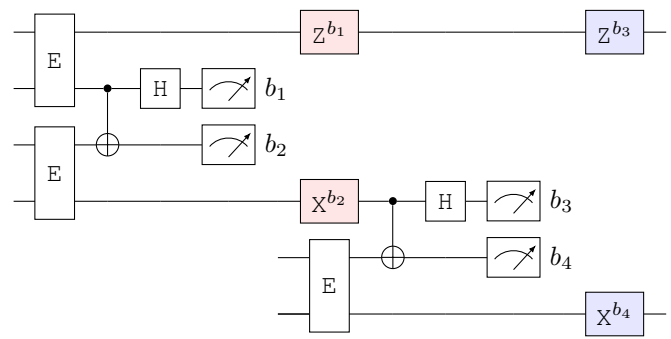
Within this section we show how to efficiently implement an entanglement path. In Section \ref{sec:swap}, we introduced the entanglement swap as a circuit of depth 5. We also claimed that such a depth is fixed when generalizing the entanglement swap to the entanglement path. To this aim, we give an inductive proof for such a statement, starting from the base case with entanglement path of length $2$. 


\begin{theorem}
\label{th:path}
    An entanglement path $\{P_{i_1}, P_{i_2}, \dots, P_{i_m}\}$ has an implementation with depth 5.
\end{theorem}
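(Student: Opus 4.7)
The plan is to proceed by explicit construction, exhibiting a 5-layer circuit whose depth does not grow with $m$, rather than by a naive induction that would accumulate Pauli corrections layer by layer (as in Figure~\ref{fig:naive_swap}). The key idea is to parallelise all entanglement generations and all intermediate Bell measurements, then postpone every Pauli correction to a single final layer.

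First, I would describe the protocol layer by layer: (i) apply $\texttt{E}_{c_{j}, c_{j+1}}$ in parallel for every $j \in [m-1]$, producing $m-1$ EPR pairs along the chain; (ii) at each intermediate processor $P_{i_k}$ with $k \in \{2,\dots,m-1\}$, apply a local $\texttt{CX}$ between its two communication qubits; (iii) at each intermediate processor, apply $\texttt{H}$ on the control qubit of that $\texttt{CX}$; (iv) measure both communication qubits at every intermediate processor in the computational basis, yielding classical bits $b_k^{\texttt{X}}$ (from the control) and $b_k^{\texttt{Z}}$ (from the target); (v) apply $\texttt{Z}^{\bigoplus_k b_k^{\texttt{X}}}$ to $c_{i_1}$ and $\texttt{X}^{\bigoplus_k b_k^{\texttt{Z}}}$ to $c_{i_m}$. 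Because distinct intermediate processors share no qubits, the operations within each of layers (i)--(iv) act on pairwise-disjoint qubits and thus constitute a single circuit layer, while layer (v) contains only two single-qubit corrections.

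Next, I would verify correctness by induction on $m$, with the base case $m=3$ supplied by Figure~\ref{fig:swap}. For the inductive step I would observe that inserting one more Bell measurement at a new intermediate processor swaps the entanglement one link further along the chain, up to an additional Pauli correction whose effect is precisely to extend the XOR exponents of layer (v) by one new bit each. The commutation rewrite rules listed in Section~\ref{sec:predicate} let every intermediate correction be pushed past the parallel $\texttt{CX}$, $\texttt{H}$, and measurement operations of the other intermediates and absorbed into the endpoints without ever creating a new quantum layer.

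The subtlety I would treat most carefully is the charging of those aggregated corrections to a single final layer regardless of $m$. This rests on the assumption $\Delta_{\texttt{U}^b} \lesssim \Delta_{\texttt{E}}$ from Section~\ref{sec:time}: the classical transmission of all $b_k^{\texttt{X}}, b_k^{\texttt{Z}}$ to the two endpoints and the computation of their parities occur in parallel with the quantum time-step that hosts the Pauli corrections, so layer (v) remains of unit depth. Combining these observations yields the required depth of exactly $5$ for every $m \geq 3$, with the trivial case $m = 2$ reducing to a single $\texttt{E}$.
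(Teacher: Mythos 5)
Your proposal is correct and follows essentially the same route as the paper's proof: an induction whose key step is commuting the intermediate swaps' Pauli corrections out to the two path endpoints, where they merge into single $\texttt{Z}$ and $\texttt{X}$ gates with XOR'd exponents, so that all entanglement generations, local \texttt{CX}/\texttt{H} operations, and measurements occupy one parallel layer each. Your up-front description of the five parallel layers and the remark on classical-communication timing are just a more explicit packaging of the same argument.
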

\begin{proof}
   Consider, as base case, that we want to create a path of length 2. Clearly, we could do that by just putting in strict sequence two entanglement swap. This way is showed in Figure \ref{fig:naive_swap}. The colored operators are the only ones we are going to optimize; since the others are independent and no optimization can be applied. The optimization is shown in Figure \ref{fig:base_path}. Specifically, circuit on the right of equation has post-processing composed by \texttt{Z}$^b$ on first qubit and \texttt{X}$^{\bar{b}}$ on last qubit. Furthermore, now the measurements are independent from other operations.
   
   By assuming that such a shape is preserved in the inductive step, we show that this transformation can be applied to any length -- see Figure \ref{fig:inductive_path}.
   This proves that we can always consider an entanglement path $\{P_{i_1}, P_{i_2}, \dots, P_{i_m}\}$ to have circuit depth 5.
\end{proof}

We just showed an efficient implementation for the entanglement path. Now we do one last step to exploit such a result and performing a generalized remote operation efficiently. Theorem \ref{th:path} allows us to assume that, to perform a remote operation by using a path of length $m$, the computing qubits interact only with two communications qubits and depend only by Pauli operations $\texttt{Z}^{b_1 \oplus b_3 \oplus\cdots \oplus b_{2m-1}}$ and $\texttt{X}^{b_2 \oplus b_4 \oplus\cdots \oplus b_{2m}}$. We can \textit{propagate} such operations as in the equivalence of Figure \ref{fig:inductive_rcx}. In this way the measurements are independent and the depth of the circuit has not increased.

\begin{figure*}[htbp]
    \centering
    \begin{quantikz}[thin lines,row sep={0.75cm,between origins}, column sep ={0.3cm}]
		& \gate[style={fill=red!10}]{\texttt{Z}^{b_1}} & \qw & \qw & \qw & \qw & \gate[style={fill=blue!10}]{\texttt{Z}^{b_3}} & \qw\\
		&\gate[style={fill=red!10}]{\texttt{X}^{b_2}} & \ctrl{1} & \gate[]{\texttt{H}} & \meter{}\rstick[]{$b_3$}\\
		& \qw  &\targ{} & \qw & \meter{}\rstick[]{$b_4$} &\\
		& \qw & \qw & \qw & \qw & \qw & \gate[style={fill=blue!10}]{\texttt{X}^{b_4}} & \qw
	\end{quantikz}
	$\equiv$
	\begin{quantikz}[thin lines,row sep={0.75cm,between origins}, column sep ={0.3cm}]
		& \qw & \qw & \qw & \gate[style={fill=violet!10}]{\texttt{Z}^{b_1\oplus b_3}}& \qw\\
		& \ctrl{1} & \gate[]{\texttt{H}} & \meter{}\rstick[]{$b_3$}\\
		& \targ{} & \qw & \meter{}\rstick[]{$b_4$} &\\
		& \qw & \qw &\qw & \gate[style={fill=violet!10}]{\texttt{X}^{b_2 \oplus b_4}} & \qw
	\end{quantikz}
	\caption{Base case equivalence of Theorem \ref{th:path}.}
	\label{fig:base_path}
\end{figure*}
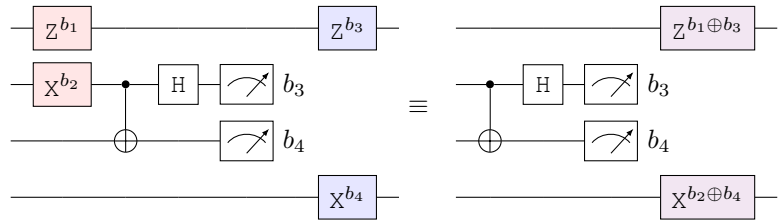
\begin{figure*}[htbp]
    \centering
    \begin{quantikz}[thin lines,row sep={0.75cm,between origins}, column sep ={0.3cm}]
		& \gate[style={fill=red!10}]{\texttt{Z}^{b_1\oplus b_3\oplus \cdots \oplus b_{2(m-1)-1}}}& \qw&\qw &\qw & \qw &\gate[style={fill=blue!10}]{\texttt{Z}^{b_{2m-1}}} &\\
		& \gate[style={fill=red!10}]{\ \texttt{X}^{b_2 \oplus b_4 \oplus\cdots \oplus b_{2(m-1)}}\ } & \ctrl{1}& \gate[]{\texttt{H}} & \meter{}\rstick[]{$b_{2m-1}$}\\
		& \qw & \targ{} & \qw & \meter{}\rstick[]{$b_{2m}$}\\
		& \qw & \qw &\qw &\qw &\qw &\gate[style={fill=blue!10}]{\ \texttt{X}^{b_{2m}}\ } &
	\end{quantikz}
	$\equiv$
	\begin{quantikz}[thin lines,row sep={0.75cm,between origins}, column sep ={0.3cm}]
		& \qw&\qw &\qw &\gate[style={fill=violet!10}]{\texttt{Z}^{b_1\oplus b_3\oplus \cdots \oplus b_{2m-1}}} &\\
		& \ctrl{1}& \gate[]{\texttt{H}} & \meter{}\rstick[]{$b_{2m-1}$}\\
		& \targ & \qw & \qw & \meter{}\rstick[]{$b_{2m}$}\\
		& \qw & \qw &\qw &\gate[style={fill=violet!10}]{\ \texttt{X}^{b_2 \oplus b_4 \oplus\cdots \oplus b_{2m}}\ } &
	\end{quantikz}
	\caption{Inductive step of Theorem \ref{th:path}.}
	\label{fig:inductive_path}
\end{figure*}
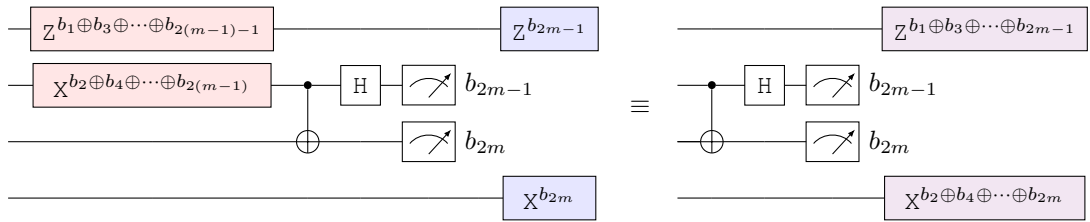
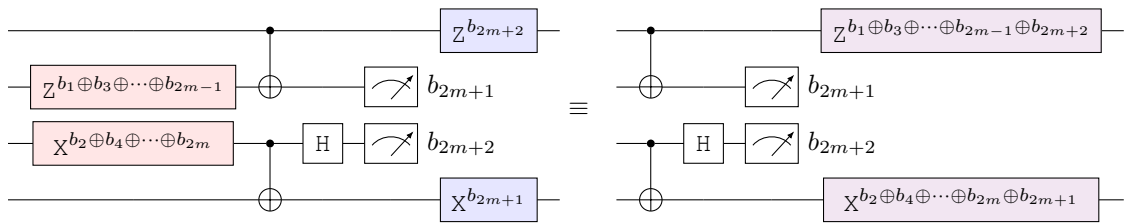
\begin{figure*}
    \centering
    \begin{quantikz}[thin lines,row sep={0.75cm,between origins}, column sep ={0.3cm}]
		& \qw& \ctrl{1} & \qw &\qw & \gate[style={fill=blue!10}]{\texttt{Z}^{b_{2m+2}}} &  \qw \\
		&\gate[style={fill=red!10}]{\texttt{Z}^{b_1 \oplus b_3 \oplus\cdots \oplus b_{2m-1}}}& \targ{} & \qw & \meter{} \rstick{$b_{2m + 1}$} \\
		&\gate[style={fill=red!10}]{\ \texttt{X}^{b_2 \oplus b_4 \oplus\cdots \oplus b_{2m}}\ }& \ctrl{1} & \gate{\texttt{H}} & \meter{} \rstick{$b_{2m + 2}$}\\
		&\qw& \targ{} & \qw & \qw & \gate[style={fill=blue!10}]{\texttt{X}^{b_{2m+1}}} & \qw
	\end{quantikz}
	$\equiv$
	\begin{quantikz}[thin lines,row sep={0.75cm,between origins}, column sep ={0.3cm}]
		& \ctrl{1} & \qw &\qw & \gate[style={fill=violet!10}]{\texttt{Z}^{b_1 \oplus b_3 \oplus\cdots \oplus b_{2m-1} \oplus b_{2m+2}}} &  \qw \\
		& \targ{} & \qw & \meter{} \rstick{$b_{2m + 1}$} \\
		& \ctrl{1} & \gate{\texttt{H}} & \meter{} \rstick{$b_{2m + 2}$}\\
		& \targ{} & \qw & \qw & \gate[style={fill=violet!10}]{\ \texttt{X}^{b_2 \oplus b_4 \oplus\cdots \oplus b_{2m} \oplus b_{2m+1}}\ } & \qw
	\end{quantikz}
	\caption{Final equivalence for the generalized remote operation.}
	\label{fig:inductive_rcx}
\end{figure*}



\begin{thebibliography}{10}
\providecommand{\url}[1]{#1}
\csname url@samestyle\endcsname
\providecommand{\newblock}{\relax}
\providecommand{\bibinfo}[2]{#2}
\providecommand{\BIBentrySTDinterwordspacing}{\spaceskip=0pt\relax}
\providecommand{\BIBentryALTinterwordstretchfactor}{4}
\providecommand{\BIBentryALTinterwordspacing}{\spaceskip=\fontdimen2\font plus
\BIBentryALTinterwordstretchfactor\fontdimen3\font minus
  \fontdimen4\font\relax}
\providecommand{\BIBforeignlanguage}[2]{{%
\expandafter\ifx\csname l@#1\endcsname\relax
\typeout{** WARNING: IEEEtran.bst: No hyphenation pattern has been}%
\typeout{** loaded for the language `#1'. Using the pattern for}%
\typeout{** the default language instead.}%
\else
\language=\csname l@#1\endcsname
\fi
#2}}
\providecommand{\BIBdecl}{\relax}
\BIBdecl

\bibitem{CacCalTaf-19}
A.~S. Cacciapuoti, M.~Caleffi, F.~Tafuri, F.~S. Cataliotti, S.~Gherardini, and
  G.~Bianchi, ``Quantum internet: networking challenges in distributed quantum
  computing.'' \emph{IEEE Network}, vol.~34, no.~1, pp. 137--143, 2019.

\bibitem{CuoCalCac-20}
D.~Cuomo, M.~Caleffi, and A.~S. Cacciapuoti, ``Towards a distributed quantum
  computing ecosystem.'' \emph{IET Quantum Communication}, vol.~1, no.~1, pp.
  3--8, 2020.

\bibitem{VanDev-16}
R.~Van~Meter and S.~J. Devitt, ``The path to scalable distributed quantum
  computing.'' \emph{Computer}, vol.~49, no.~9, pp. 31--42, 2016.

\bibitem{Kim-08}
H.~J. Kimble, ``The quantum internet.'' \emph{Nature}, vol. 453, no. 7198, pp.
  1023--1030, 2008.

\bibitem{PirBra-16}
S.~Pirandola and S.~L. Braunstein, ``Physics: Unite to build a quantum
  internet.'' \emph{Nature News}, vol. 532, no. 7598, p. 169, 2016.

\bibitem{DurLamHeu-17}
W.~D{\"u}r, R.~Lamprecht, and S.~Heusler, ``Towards a quantum internet.''
  \emph{European Journal of Physics}, vol.~38, no.~4, p. 043001, 2017.

\bibitem{WehElkHan-18}
S.~Wehner, D.~Elkouss, and R.~Hanson, ``Quantum internet: A vision for the road
  ahead.'' \emph{Science}, vol. 362, no. 6412, 2018.

\bibitem{Cas-18}
D.~Castelvecchi, ``The quantum internet has arrived (and it hasn't).''
  \emph{Nature}, vol. 554, no. 7690, pp. 289--293, 2018.

\bibitem{FerCacAmo-21}
D.~Ferrari, A.~S. Cacciapuoti, M.~Amoretti, and M.~Caleffi, ``Compiler design
  for distributed quantum computing.'' \emph{IEEE Transactions on Quantum
  Engineering}, vol.~2, pp. 1--20, 2021.

\bibitem{BotKisMar-18}
A.~Botea, A.~Kishimoto, and R.~Marinescu, ``On the complexity of quantum
  circuit compilation.'' in \emph{Eleventh annual symposium on combinatorial
  search}, 2018.

\bibitem{MasFalMos-08}
D.~Maslov, S.~M. Falconer, and M.~Mosca, ``Quantum circuit placement.''
  \emph{IEEE Transactions on Computer-Aided Design of Integrated Circuits and
  Systems}, vol.~27, no.~4, pp. 752--763, 2008.

\bibitem{SirSanCol-18}
M.~Y. Siraichi, V.~F.~d. Santos, S.~Collange, and F.~M.~Q. Pereira, ``Qubit
  allocation.'' in \emph{Proceedings of the 2018 International Symposium on
  Code Generation and Optimization}, 2018, pp. 113--125.

\bibitem{WilBurZul-19}
R.~Wille, L.~Burgholzer, and A.~Zulehner, ``Mapping quantum circuits to ibm qx
  architectures using the minimal number of swap and h operations.'' in
  \emph{2019 56th ACM/IEEE Design Automation Conference}.\hskip 1em plus 0.5em
  minus 0.4em\relax IEEE, 2019, pp. 1--6.

\bibitem{LiDinXie-19}
G.~Li, Y.~Ding, and Y.~Xie, ``Tackling the qubit mapping problem for nisq-era
  quantum devices.'' in \emph{Proceedings of the 24th International Conference
  on Architectural Support for Programming Languages and Operating Systems},
  2019, pp. 1001--1014.

\bibitem{ZulWil-19}
A.~Zulehner and R.~Wille, ``{Compiling $SU(4)$ quantum circuits to IBM QX
  architectures}.'' in \emph{Proceedings of the 24th Asia and South Pacific
  Design Automation Conference}, 2019, pp. 185--190.

\bibitem{ItoRayIma-19}
T.~Itoko, R.~Raymond, T.~Imamichi, A.~Matsuo, and A.~W. Cross, ``Quantum
  circuit compilers using gate commutation rules.'' in \emph{Proceedings of the
  24th Asia and South Pacific Design Automation Conference}, 2019, pp.
  191--196.

\bibitem{ZhaZheZha-20}
Y.-H. Zhang, P.-L. Zheng, Y.~Zhang, and D.-L. Deng, ``Topological quantum
  compiling with reinforcement learning,'' \emph{Physical Review Letters}, vol.
  125, no.~17, p. 170501, 2020.

\bibitem{KarTezPet-20}
P.~J. Karalekas, N.~A. Tezak, E.~C. Peterson, C.~A. Ryan, M.~P. da~Silva, and
  R.~S. Smith, ``A quantum-classical cloud platform optimized for variational
  hybrid algorithms.'' \emph{Quantum Science and Technology}, vol.~5, no.~2, p.
  024003, 2020.

\bibitem{MorParRes-21}
L.~Moro, M.~G. Paris, M.~Restelli, and E.~Prati, ``Quantum compiling by {Deep
  Reinforcement Learning}.'' \emph{Nature Communications Physics}, vol.~4, no.
  178, 2021.

\bibitem{MarMorRoc-21}
M.~Maronese, L.~Moro, L.~Rocutto, and E.~Prati, ``Quantum compiling.''
  \emph{arXiv preprint arXiv:2112.00187}, 2021.

\bibitem{BooDoBec-18}
K.~E. Booth, M.~Do, J.~C. Beck, E.~Rieffel, D.~Venturelli, and J.~Frank,
  ``Comparing and integrating constraint programming and temporal planning for
  quantum circuit compilation.'' in \emph{Twenty-Eighth international
  conference on automated planning and scheduling}, 2018.

\bibitem{FerAmo-21}
D.~Ferrari and M.~Amoretti, ``Noise-adaptive quantum compilation strategies
  evaluated with application-motivated benchmarks.'' \emph{arXiv preprint
  arXiv:2108.11874}, 2021.

\bibitem{LinAnsHar-21}
J.~X. Lin, E.~R. Anschuetz, and A.~W. Harrow, ``Using spectral graph theory to
  map qubits onto connectivity-limited devices.'' \emph{ACM Transactions on
  Quantum Computing}, vol.~2, no.~1, pp. 1--30, 2021.

\bibitem{Kon-21}
M.~Kong, ``On the impact of affine loop transformations in qubit allocation.''
  \emph{ACM Transactions on Quantum Computing}, vol.~2, no.~3, pp. 1--40, 2021.

\bibitem{BeaBriGra-13}
R.~Beals, S.~Brierley, O.~Gray, A.~W. Harrow, S.~Kutin, N.~Linden, D.~Shepherd,
  and M.~Stather, ``Efficient distributed quantum computing.''
  \emph{Proceedings of the Royal Society A: Mathematical, Physical and
  Engineering Sciences}, vol. 469, no. 2153, p. 20120686, 2013.

\bibitem{ZomHouHou-18}
M.~Zomorodi-Moghadam, M.~Houshmand, and M.~Houshmand, ``Optimizing
  teleportation cost in distributed quantum circuits.'' \emph{International
  Journal of Theoretical Physics}, vol.~57, no.~3, pp. 848--861, 2018.

\bibitem{DaeNavZom-20}
O.~Daei, K.~Navi, and M.~Zomorodi-Moghadam, ``Optimized quantum circuit
  partitioning.'' \emph{International Journal of Theoretical Physics}, vol.~59,
  no.~12, pp. 3804--3820, 2020.

\bibitem{NikMohSed-21}
E.~Nikahd, N.~Mohammadzadeh, M.~Sedighi, and M.~S. Zamani, ``Automated
  window-based partitioning of quantum circuits.'' \emph{Physica Scripta},
  vol.~96, no.~3, p. 035102, 2021.

\bibitem{RouZomSar-21}
M.~B. Roui, M.~Zomorodi, M.~Sarvelayati, M.~Abdar, H.~Noori, P.~P{\l}awiak,
  R.~Tadeusiewicz, X.~Zhou, A.~Khosravi, S.~Nahavandi \emph{et~al.}, ``A novel
  approach based on genetic algorithm to speed up the discovery of
  classification rules on gpus.'' \emph{Knowledge-Based Systems}, vol. 231, p.
  107419, 2021.

\bibitem{DadZomHos-21}
D.~Dadkhah, M.~Zomorodi, and S.~E. Hosseini, ``A new approach for optimization
  of distributed quantum circuits.'' \emph{International Journal of Theoretical
  Physics}, vol.~60, no.~9, pp. 3271--3285, 2021.

\bibitem{DaeNavZom-21}
O.~Daei, K.~Navi, and M.~Zomorodi, ``Improving the teleportation cost in
  distributed quantum circuits based on commuting of gates.''
  \emph{International Journal of Theoretical Physics}, vol.~60, no.~9, pp.
  3494--3513, 2021.

\bibitem{SarZom-21}
M.~Sarvaghad-Moghaddam and M.~Zomorodi, ``A general protocol for distributed
  quantum gates.'' \emph{Quantum Information Processing}, vol.~20, no.~8, pp.
  1--14, 2021.

\bibitem{ZomDavGho-21}
M.~Zomorodi-Moghadam, Z.~Davarzani, I.~Ghodsollahee \emph{et~al.},
  ``Connectivity matrix model of quantum circuits and its application to
  distributed quantum circuit optimization.'' \emph{Quantum Information
  Processing}, vol.~20, 2021.

\bibitem{SunGupRam-21}
R.~G~Sundaram, H.~Gupta, and C.~Ramakrishnan, ``Efficient distribution of
  quantum circuits.'' in \emph{35th International Symposium on Distributed
  Computing}.\hskip 1em plus 0.5em minus 0.4em\relax Schloss
  Dagstuhl-Leibniz-Zentrum f{\"u}r Informatik, 2021.

\bibitem{ZhoWanZou-20}
C.~Zhong, Z.~Wang, C.~Zou, M.~Zhang, X.~Han, W.~Fu, M.~Xu, S.~Shankar, M.~H.
  Devoret, H.~X. Tang \emph{et~al.}, ``Proposal for heralded generation and
  detection of entangled microwave--optical-photon pairs.'' \emph{Physical
  review letters}, vol. 124, no.~1, p. 010511, 2020.

\bibitem{KraRanHam-21}
S.~Krastanov, H.~Raniwala, J.~Holzgrafe, K.~Jacobs, M.~Lon{\v{c}}ar, M.~J.
  Reagor, and D.~R. Englund, ``Optically heralded entanglement of
  superconducting systems in quantum networks.'' \emph{Physical Review
  Letters}, vol. 127, no.~4, p. 040503, 2021.

\bibitem{FleSku-02}
L.~Fleischer and M.~Skutella, ``The quickest multicommodity flow problem.'' in
  \emph{International Conference on Integer Programming and Combinatorial
  Optimization}.\hskip 1em plus 0.5em minus 0.4em\relax Springer, 2002, pp.
  36--53.

\bibitem{ForFul-58}
L.~R. Ford~Jr and D.~R. Fulkerson, ``Constructing maximal dynamic flows from
  static flows.'' \emph{Operations research}, vol.~6, no.~3, pp. 419--433,
  1958.

\bibitem{FulFor-58}
L.~R. Ford~Jr and D.~Fulkerson, ``A suggested computation for maximal
  multi-commodity network flows.'' \emph{Management Science}, vol.~5, no.~1,
  p.~97, 1958.

\bibitem{SriSta-00}
A.~Srivastav and P.~Stangier, ``On complexity, representation and approximation
  of integral multicommodity flows.'' \emph{Discrete Applied Mathematics},
  vol.~99, no. 1-3, pp. 183--208, 2000.

\bibitem{ChoCho-06}
D.-S. Choi and I.-C. Choi, ``On the effectiveness of the linear programming
  relaxation of the 0-1 multi-commodity minimum cost network flow problem.'' in
  \emph{International Computing and Combinatorics Conference}.\hskip 1em plus
  0.5em minus 0.4em\relax Springer, 2006, pp. 517--526.

\bibitem{CheKhaShe-06}
C.~Chekuri, S.~Khanna, and B.~Shepherd, ``An $\mathcal{O}(\sqrt{n})$
  approximation and integrality gap for disjoint paths and unsplittable flow.''
  \emph{Theory of computing}, vol.~2, no.~1, pp. 137--146, 2006.

\bibitem{CheKhaShe-04}
C.~{Chekuri}, S.~Khanna, and B.~Shepherd, ``The all-or-nothing multicommodity
  flow problem.'' in \emph{Proceedings of the 36th annual ACM symposium on
  theory of computing}, 2004, pp. 156--165.

\bibitem{Sri-97}
A.~Srinivasan, ``Improved approximations for edge-disjoint paths, unsplittable
  flow, and related routing problems.'' in \emph{Proceedings 38th Annual
  Symposium on Foundations of Computer Science}.\hskip 1em plus 0.5em minus
  0.4em\relax IEEE, 1997, pp. 416--425.

\bibitem{ChaCheGup-07}
A.~Chakrabarti, C.~Chekuri, A.~Gupta, and A.~Kumar, ``Approximation algorithms
  for the unsplittable flow problem.'' \emph{Algorithmica}, vol.~47, no.~1, pp.
  53--78, 2007.

\bibitem{CicConPas-21}
C.~Cicconetti, M.~Conti, and A.~Passarella, ``Request scheduling in quantum
  networks.'' \emph{IEEE Transactions on Quantum Engineering}, vol.~2, pp.
  2--17, 2021.

\bibitem{CacCalVan-20}
A.~S. Cacciapuoti, M.~Caleffi, R.~Van~Meter, and L.~Hanzo, ``When entanglement
  meets classical communications: Quantum teleportation for the quantum
  internet.'' \emph{IEEE Transactions on Communications}, vol.~68, no.~6, pp.
  3808--3833, 2020.

\bibitem{KozWehVan-21}
W.~Kozlowski, S.~Wehner, R.~V. Meter, B.~Rijsman, A.~S. Cacciapuoti, and
  M.~Caleffi, ``{Architectural Principles for a Quantum Internet.}'' Internet
  Engineering Task Force, Internet-Draft draft-irtf-qirg-principles-03, Jun.
  2021, work in Progress.

\bibitem{Kit-97}
A.~Y. Kitaev, ``Quantum computations: algorithms and error correction.''
  \emph{Uspekhi Matematicheskikh Nauk}, vol.~52, no.~6, pp. 53--112, 1997.

\bibitem{NieChu-02}
M.~A. Nielsen and I.~Chuang, ``Quantum computation and quantum information.''
  2002.

\bibitem{akerman2015universal}
N.~Akerman, N.~Navon, S.~Kotler, Y.~Glickman, and R.~Ozeri, ``Universal
  gate-set for trapped-ion qubits using a narrow linewidth diode laser.''
  \emph{New Journal of Physics}, vol.~17, no.~11, p. 113060, 2015.

\bibitem{dur2017towards}
W.~D{\"u}r, R.~Lamprecht, and S.~Heusler, ``Towards a quantum internet.''
  \emph{European Journal of Physics}, vol.~38, no.~4, p. 043001, 2017.

\bibitem{Got-98}
D.~Gottesman, ``Theory of fault-tolerant quantum computation.'' \emph{Physical
  Review A}, vol.~57, no.~1, p. 127, 1998.

\bibitem{Sel-13}
P.~Selinger, ``{Quantum circuits of \texttt{T}-depth one}.'' \emph{Physical
  Review A}, vol.~87, no.~4, p. 042302, 2013.

\bibitem{AmyMasMos-14}
M.~Amy, D.~Maslov, and M.~Mosca, ``{Polynomial-time \texttt{T}-depth
  optimization of Clifford+\texttt{T} circuits via matroid partitioning}.''
  \emph{IEEE Transactions on Computer-Aided Design of Integrated Circuits and
  Systems}, vol.~33, no.~10, pp. 1476--1489, 2014.

\bibitem{AndHeu-19}
P.~Andres-Martinez and C.~Heunen, ``Automated distribution of quantum circuits
  via hypergraph partitioning.'' \emph{Physical Review A}, vol. 100, no.~3, p.
  032308, 2019.

\bibitem{DavZomHou-20}
Z.~Davarzani, M.~Zomorodi-Moghadam, M.~Houshmand, and M.~Nouri-baygi, ``A
  dynamic programming approach for distributing quantum circuits by bipartite
  graphs.'' \emph{Quantum Information Processing}, vol.~19, no.~10, pp. 1--18,
  2020.

\bibitem{LinJai-14}
M.~Lin and P.~Jaillet, ``On the quickest flow problem in dynamic networks --
  {A} parametric min-cost flow approach.'' in \emph{Proceedings of the 26th
  annual ACM-SIAM symposium on discrete algorithms}.\hskip 1em plus 0.5em minus
  0.4em\relax SIAM, 2014, pp. 1343--1356.

\bibitem{EveItaSha-75}
S.~Even, A.~Itai, and A.~Shamir, ``On the complexity of time table and
  multi-commodity flow problems.'' in \emph{16th Annual Symposium on
  Foundations of Computer Science}.\hskip 1em plus 0.5em minus 0.4em\relax
  IEEE, 1975, pp. 184--193.

\end{thebibliography}
\end{document}